\newcommand{\subfiguretitle}[1]{{\scriptsize{#1}} \\}
\newcommand{\R}{\mathbb{R}}                                     
\newcommand{\pd}[2]{\frac{\partial#1}{\partial#2}}              
\newcommand{\ts}{\hspace*{0.1em}}                               
\providecommand{\abs}[1]{\left\lvert #1 \right\rvert}           
\providecommand{\norm}[1]{\left\lVert #1 \right\rVert}          
\newcolumntype{C}[1]{>{\centering\let\newline\\\arraybackslash\hspace{0pt}}m{#1}}
\newcommand\xqed[1]{\leavevmode\unskip\penalty9999 \hbox{}\nobreak\hfill \quad\hbox{#1}}
\newcommand{\exampleSymbol}{\xqed{$\triangle$}}
\newtheorem{theorem}{Theorem}[section]
\newtheorem{corollary}[theorem]{Corollary}
\newtheorem{lemma}[theorem]{Lemma}
\theoremstyle{definition}
\newtheorem{example}[theorem]{Example}
\newtheorem{remark}[theorem]{Remark}
\renewcommand*\env@matrix[1][*\c@MaxMatrixCols c]{%
  \hskip -\arraycolsep
  \let\@ifnextchar\new@ifnextchar
  \array{#1}}
\def\blfootnote{\gdef\@thefnmark{}\@footnotetext}
\begin{document}

\title{Koopman analysis of quantum systems}
\author[1]{Stefan Klus\footnote{These authors contributed equally to this work}}
\author[2,3]{Feliks Nüske$^\ast$}
\author[4]{Sebastian Peitz}
\affil[1]{Department of Mathematics, University of Surrey, UK}
\affil[2]{Department of Mathematics, Paderborn University, Germany}
\affil[3]{Max Planck Institute for Dynamics of Complex Technical Systems, Magdeburg, Germany}
\affil[4]{Department of Computer Science, Paderborn University, Germany}

\date{}

\maketitle

\begin{abstract}
Koopman operator theory has been successfully applied to problems from various research areas such as fluid dynamics, molecular dynamics, climate science, engineering, and biology. Applications include detecting metastable or coherent sets, coarse-graining, system identification, and control. There is an intricate connection between dynamical systems driven by stochastic differential equations and quantum mechanics. In this paper, we compare the ground-state transformation and Nelson's stochastic mechanics and demonstrate how data-driven methods developed for the approximation of the Koopman operator can be used to analyze quantum physics problems. Moreover, we exploit the relationship between Schrödinger operators and stochastic control problems to show that modern data-driven methods for stochastic control can be used to solve the stationary or imaginary-time Schrödinger equation. Our findings open up a new avenue towards solving Schrödinger's equation using recently developed tools from data science.
\end{abstract}

\section{Introduction}

Relationships between the Schrödinger equation and the Fokker--Planck equation have been explored since the early days of quantum mechanics. Schrödinger \cite{Schroedinger31} already wrote:
{ \setlength{\leftmargini}{2ex}
\begin{quote}
\textit{Eine gewisse Verwandtschaft der wellenmechanischen Grundgleichung und der Fokkerschen Gleichung, sowie der an beide anknüpfenden statistischen Begriffsbildungen hat sich wohl jedem aufgedrängt, der mit den beiden Ideenkreisen hinlänglich vertraut ist.}\footnote{A certain relationship between the basic wave mechanical equation and the Fokker equation, as well as the related statistical concepts, has probably struck anyone who is sufficiently familiar with both ideas.}
\end{quote}}
He then continues to point out two major differences, namely that (1) in classical systems the probability density $ \rho $ itself obeys a linear differential equation, whereas in quantum mechanics the wave function $ \psi $ (not the probability density $ \rho $) satisfies a related linear differential equation, from which we can obtain the probabilities by computing $ \rho = \abs{\psi}^2 $, and that (2) the imaginary unit~$ i $ changes the nature of the differential equation and makes it reversible, while in the classical case the dynamics are irreversible.

There are many different formulations and interpretations of quantum mechanics, a concise overview is given in \cite{Styer02}. The pilot-wave description by de Broglie--Bohm and Nelson's stochastic mechanics are often subsumed under \emph{hidden variable theories}, where the latter is less well-known and often presented as a stochastic variant of the former \cite{Bacciagaluppi05}. The main conceptual difference, however, is that in the original de Broglie--Bohm theory the dynamics depend on the phase of the wave function obeying the Schrödinger equation, whereas Nelson aimed at deriving the wave function and the Schrödinger equation, assuming only that particle trajectories can be described by diffusion processes in configuration space \cite{Bacciagaluppi99, Bacciagaluppi05}. While conventional quantum mechanics and stochastic mechanics make the same predictions, the wave function $ \psi $ plays no fundamental role in the context of Nelson's formulation and can be viewed as a convenient computational device \cite{Carlen06}. Nevertheless, it is also possible to obtain the associated stochastic process given a wave function $ \psi $ that solves the Schrödinger equation. We will use this viewpoint as a convenient tool to derive stochastic dynamics. A comprehensive review and comparison of hidden variable theories and their interpretations can be found in \cite{Gouesbet14}.

We will not discuss the physical interpretation of hidden variables, our goal instead is to establish and exploit mathematical connections between stochastic processes---in particular classical drift-diffusion processes---and quantum mechanics. In recent years, a wealth of numerical methods has been developed in order to analyze stochastic dynamics based on trajectory data, see \cite{Prinz2011, Noe2013, WKR15, KKS16, KNKWKSN18, KNPNCS20}. Moreover, we showed in \cite{KNH20} that data-driven methods for the analysis of classical dynamical systems can be applied either directly to quantum systems or to their stochastic counterparts. To this end, we used a well-known transformation of the Schrödinger equation into a Kolmogorov backward equation, which is governed by the generator of an associated stochastic differential equation. This transformation is based on the assumption that a strictly positive ground state exists. Equivalent transformations exist also for the Fokker--Planck equation \cite{Risken89, Pav14}. These methods were then used to compute eigenfunctions of the time-independent Schrödinger equation.

In this paper, we will derive the transformation of the Schrödinger equation to the Kolmogorov equation in a more general context and show how the transformation leads to numerical methods to solve the Schrödinger equation based on data. We will present and illustrate various algorithms in detail, requiring different amounts of prior information. The main contributions of this work are:
\begin{itemize}[wide, itemindent=\parindent, itemsep=0ex, topsep=0.5ex]
\item We revisit the connections between Koopman operator theory and quantum mechanics using the transformation of the Schrödinger equation into a Kolmogorov backward equation. We compare the transformation to Nelson's stochastic formulation and explain how this transformation can be used to obtain multiple new solutions of Schrödinger's equation based on trajectory data, provided a single solution is known.
\item We recapitulate the relationships between solutions of the imaginary-time Schrödinger equation and a stochastic optimal control problem based on the Feynman--Kac formula and Hamilton--Jacobi--Bellman equation. We then use recent results for data-driven control of control-affine systems to show that the Schrödinger equation can be solved by means of an optimal control problem constrained by an ordinary differential equation. This formulation does not require any prior knowledge of a solution.

\item We show how to apply recently developed data-driven techniques for the approximation of the Koopman operator to quantum systems. In the quantum context, this allows us to compute ground and excited states as well as general solutions of the time-dependent Schrödinger equation.
\end{itemize}

In Section~\ref{sec:Koopman and Schroedinger}, we will introduce the Koopman operator and its generator as well as the Schrödinger operator. In Section~\ref{sec:Stochastic descriptions}, hidden variable interpretations of quantum mechanics, which we will use for the numerical analysis of quantum systems, will be outlined. Furthermore, we will explore connections between the ground-state transformation and Nelson's stochastic mechanics and derive stochastic descriptions of well-known quantum systems. An optimal control formulation for the Schrödinger equation will be derived in Section~\ref{sec:Optimal control formulation}. Section~\ref{sec:Data-driven methods for quantum systems} shows how Koopman operator theory can be applied to quantum physics problems and Section~\ref{sec:numerics_control} how the control formulation can be used to solve the imaginary-time Schrödinger equation. Open questions and future work will be discussed in Section~\ref{sec:Conclusion}.

\section{Koopman operator theory and quantum mechanics}
\label{sec:Koopman and Schroedinger}

We start by briefly introducing the stochastic Koopman operator and the Schrödinger operator. The goal is to apply Koopman operator theory and related numerical methods to quantum systems.

\subsection{The Koopman operator}
\label{subsec:koopman_operator}
Consider a dynamical system defined on a state space $ \mathbb{X} \subset \mathbb{R}^d $, given by a stochastic differential equation
\begin{equation} \label{eq:SDE}
    \mathrm{d}X_t = b(X_t, t) \ts \mathrm{d}t + \sigma(X_t, t) \ts \mathrm{d}B_t.
\end{equation} 
Here, $ b $ is called the \emph{drift term}, $ \sigma $ the \emph{diffusion term}, and $ B_t $ is a standard Wiener process. We also introduce the notation $a(x, t) := \sigma(x, t)\sigma(x, t)^\top$ for the covariance matrix of the diffusion. The \emph{Koopman operator} \cite{Ko31, LaMa94, BMM12, KKS16} is then defined, for any time lag $ \Delta_t \geq 0 $ and suitable functions $ f $, by the conditional expectation:
\begin{equation} \label{eq:Koopman operator}
    (\mathcal{K}^{t, \Delta_t} f)(x) = \mathbb{E}^x[f(X_{t + \Delta_t})] = \mathbb{E}[f(X_{t + \Delta_t})\, \vert X_t = x \,].
\end{equation}
The Koopman operator for ordinary differential equations, i.e., $ \sigma \equiv 0 $, can be regarded as a special case where the computation of the expectation value can be omitted. The family of operators $ \mathcal{K}^{t, \Delta_t} $ satisfy the important semigroup property
\begin{equation*}
    \mathcal{K}^{t, \Delta_{t_1} + \Delta_{t_2}} = \mathcal{K}^{t+\Delta_{t_1}, \Delta_{t_2}}\mathcal{K}^{t, \Delta_{t_1}}.
\end{equation*}
This motivates the definition of the associated \emph{generator} of the Koopman operator \eqref{eq:Koopman operator} as the time derivative $ \mathcal{L}_t f = \lim\limits_{\Delta_t \rightarrow 0} \frac{1}{\Delta_t} \left(\mathcal{K}^{t, \Delta_t} f - f \right) $. The result is the following second order differential operator, also known as Kolmogorov backward operator:
\begin{equation} \label{eq:Koopman generator}
\begin{split}
    \mathcal{L}_t f(x) &= \sum_{i=1}^d b_i(x, t) \ts \pd{f(x)}{x_i} + \frac{1}{2} \sum_{i=1}^d \sum_{j=1}^d a_{ij}(x, t) \ts \pd{^2 f(x)}{x_i \ts \partial x_j} \\ &= b(x, t) \cdot \nabla f(x) + \frac{1}{2} a(x, t) : \nabla^2 f(x).
\end{split}
\end{equation}
The adjoint of the Koopman operator is called \emph{Perron--Frobenius operator}, its generator is the \emph{Fokker--Planck operator}
\begin{equation*}
    \mathcal{L}^*_t p(x) = -\sum_{i=1}^d \pd{(b_i(x, t) p(x))}{x_i} + \frac{1}{2} \sum_{i=1}^d \sum_{j=1}^d \pd{^2  (a_{ij}(x, t) p(x))}{x_i \ts \partial x_j}.
\end{equation*}
The associated Fokker--Planck equation describes the time evolution of the probability distribution of the process $X_t$. If the drift term $ b $ is defined by the gradient of a scalar potential $ V $, the diffusion $ \sigma $ is isotropic, and both are time-independent, we obtain the simplified stochastic differential equation
\begin{equation} \label{eq:drift-diffusion process}
    \mathrm{d}X_t = -\nabla V(X_t) \ts \mathrm{d}t + \sqrt{2 \beta^{-1}} \ts \mathrm{d}B_t,
\end{equation}
which is frequently used in molecular dynamics and, as we will see below, can also be used to model quantum systems. The parameter $ \beta $ is called \emph{inverse temperature}. The smaller $ \beta $, the larger the noise. For systems of this form, the generator is given by
\begin{equation*} 
    \mathcal{L} f = -\nabla V \cdot \nabla f + \beta^{-1} \Delta f.
\end{equation*}

Koopman operators provide a lifting of the nonlinear dynamical system~\eqref{eq:SDE} into the infinite-dimensional space of observable functions $ f $, where the dynamics are driven by the linear operator $ \mathcal{L}_t $. In practice, $ \mathcal{K}^{t, \Delta_t} $ must be approximated on a finite-dimensional subspace: For linearly independent functions $ \Phi = [\phi_1, \dots, \phi_n]^\top $, the Galerkin projection of the Koopman operator is given by the matrix:
\begin{align*}
    K^{t, \Delta_t} &= \left(C^t\right)^{-1}\ts A^t, & C^t_{ij} &= \mathbb{E}\left[\phi_i(X_t) \ts \phi_j(X_t)\right], & A^t_{ij} &= \mathbb{E}\left[\phi_i(X_t) \ts \phi_j(X_{t+\Delta_t})\right].
\end{align*}
Computing the matrix approximation $ K^{t, \Delta_t} $ thus only requires instantaneous and one-step correlation functions, which can be efficiently approximated using simulation data~\cite{NKPMN14, WKR15, KKS16, KNKWKSN18}. By definition, the matrix approximation $K^{t, \Delta_t}$ can be used to approximately predict the expectation at time $t+\Delta_t$ for observables contained in the linear span of the functions $ \phi_1, \dots, \phi_n $.

\subsection{The Schrödinger equation}

We will use atomic units throughout the paper for the sake of simplicity. Given the Hamiltonian $ \mathcal{H} = - \frac{1}{2} \Delta + W $, where $ W $ is the potential energy,\!\footnote{In order to avoid confusion, molecular dynamics potentials are denoted by $ V $ and quantum mechanics potentials by $ W $.} the \emph{time-dependent Schrödinger equation} is given by
\begin{equation} \label{eq:TDSE}
    i \pd{\psi}{t} = \mathcal{H} \psi.
\end{equation}
This partial differential equation describes the evolution of the wave function $ \psi $. The \emph{time-independent Schrödinger equation}, defined by
\begin{equation} \label{eq:TISE}
    \mathcal{H} \psi = E \psi,
\end{equation}
is an eigenvalue problem, where $ E $ is the associated energy. Given a solution $ \psi $ of~\eqref{eq:TISE}, the corresponding time-dependent solution of \eqref{eq:TDSE} is
\begin{equation} \label{eq:TDSE solution}
    \psi(x, t) = \psi(x) \ts e^{-i E t}.
\end{equation}
We refer the reader to \cite{Hall13} for a detailed introduction. By setting $ \tau = i \ts t $, we obtain the so-called \emph{imaginary-time Schrödinger equation}
\begin{equation} \label{eq:it TDSE}
    \pd{\psi}{\tau} = -\mathcal{H} \psi,
\end{equation}
which, for instance, plays an important role in quantum Monte Carlo methods. These different types of Schrödinger equations will be used below to derive stochastic models and to generate data.

\section{Stochastic descriptions of quantum systems}
\label{sec:Stochastic descriptions}

There are different ways to derive stochastic descriptions of quantum systems. In this section, we will introduce the ground-state transformation, outline Nelson's stochastic mechanics, and compare the resulting models.

\subsection{Ground-state transformation}
\label{subsec:ground_state_transform}

In \cite{KNH20}, we considered quantum systems with strictly positive ground states. The transformation to the Kolmogorov backward equation, however, also works for complex wave functions. Since our goal is to compare this transformation with Nelson's stochastic mechanics, we directly use his notation and write $ \psi_0 = e^{R + i \ts S} $.

\begin{lemma} \label{lem:Ground-state transformation}
Assume that $ \psi_0 $ is a stationary solution with $ \mathcal{H} \psi_0 = E_0 \ts \psi_0 $. Let $ \psi_0(x) \ne 0 $ for all $ x $, then $ \mathcal{H} (\psi_0 \ts f) = E (\psi_0 \ts f) $ if and only if $ -\mathcal{L} f = (E - E_0) f $, where
\begin{align*}
    \mathcal{L} f = \big(\nabla R + i \ts \nabla S \big) \cdot \nabla f + \frac{1}{2} \Delta f.
\end{align*}
\end{lemma}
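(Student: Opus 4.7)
The plan is a direct computation of $\mathcal{H}(\psi_0 f)$ using the product rule for the Laplacian and the fact that $\psi_0$ is itself an eigenfunction, followed by dividing through by $\psi_0$, which is nonzero by hypothesis.

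First I would expand
\[
    \Delta(\psi_0 f) = (\Delta \psi_0)\ts f + 2 \ts \nabla \psi_0 \cdot \nabla f + \psi_0 \ts \Delta f,
\]
so that
\[
    \mathcal{H}(\psi_0 f) = -\tfrac{1}{2}\Delta(\psi_0 f) + W \psi_0 f
    = (\mathcal{H}\psi_0)\ts f - \nabla \psi_0 \cdot \nabla f - \tfrac{1}{2}\psi_0 \ts \Delta f.
\]
Using $\mathcal{H}\psi_0 = E_0 \psi_0$, the first term becomes $E_0 \ts \psi_0 f$. Next I would substitute the chain-rule identity $\nabla \psi_0 = \psi_0 \ts \nabla(R + i\ts S) = \psi_0 (\nabla R + i \ts \nabla S)$, which is where the specific form $\psi_0 = e^{R + i \ts S}$ enters, yielding
\[
    \mathcal{H}(\psi_0 f) = E_0 \ts \psi_0 f - \psi_0 \bigl[(\nabla R + i \ts \nabla S)\cdot \nabla f + \tfrac{1}{2} \Delta f \bigr]
    = E_0 \ts \psi_0 f - \psi_0 \ts \mathcal{L} f.
\]

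The equivalence then falls out cleanly. The equation $\mathcal{H}(\psi_0 f) = E (\psi_0 f)$ becomes $(E - E_0)\ts \psi_0 f = -\psi_0 \ts \mathcal{L} f$, and since $\psi_0(x) \ne 0$ for all $x$, one may divide pointwise to obtain $-\mathcal{L}f = (E - E_0) f$. Every step is reversible (multiplying back by $\psi_0$ and using the same identities in reverse), so the two conditions are indeed equivalent.

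I do not anticipate a serious obstacle: the only subtlety is purely formal, namely that $R$ and $S$ need not be globally defined smooth functions if $\psi_0$ is genuinely complex-valued, but the nonvanishing assumption together with smoothness of $\psi_0$ lets one make sense of $\nabla R + i \ts \nabla S$ simply as $\nabla \psi_0 / \psi_0$, so the derivation goes through without appealing to a global branch of the logarithm. The computation is otherwise a one-line application of Leibniz's rule and cancellation of the common factor $\psi_0$.
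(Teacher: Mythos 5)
Your proof is correct and follows essentially the same route as the paper: expand $\Delta(\psi_0 f)$ by the product rule, invoke $\mathcal{H}\psi_0 = E_0\psi_0$, substitute $\psi_0^{-1}\nabla\psi_0 = \nabla R + i\,\nabla S$, and divide by the nonvanishing $\psi_0$. Your closing observation that $\nabla R + i\,\nabla S$ is best read as $\nabla\psi_0/\psi_0$ to avoid global branch issues is a helpful clarification, but the underlying argument is identical.
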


\begin{proof}
First, we compute $ \Delta (\psi_0 \ts f) = \Delta \psi_0 \ts f + 2 \ts \nabla \psi_0 \cdot \nabla f + \psi_0 \ts \Delta f $. Then
\begin{align*}
    (\mathcal{H} - E) (\psi_0 \ts f)
        &=  -\frac{1}{2} \Delta (\psi_0 \ts f) + W(\psi_0 \ts f) - E(\psi_0 \ts f) \\
        &= -\frac{1}{2} (\Delta \psi_0 \ts f + 2 \ts \nabla \psi_0 \cdot \nabla f + \psi_0 \ts \Delta f) + W(\psi_0 \ts f) - E(\psi_0 \ts f) \\
        &= -\frac{1}{2} (2 \ts \nabla \psi_0 \cdot \nabla f + \psi_0 \ts \Delta f) - (E - E_0)(\psi_0 f).
\end{align*}
In the last step, we used $ (\mathcal{H} - E_0) \psi_0 = -\frac{1}{2} \Delta \psi_0 + W \psi_0 - E_0 \ts \psi_0 = 0 $. Dividing by $ \psi_0 $ and using $ \psi_0^{-1} \nabla \psi_0 = \nabla R + i \nabla S $, the equation is only zero if
\begin{equation*}
    -\mathcal{L} f = -\big(\nabla R + i \ts \nabla S\big) \cdot \nabla f - \frac{1}{2} \Delta f = (E - E_0) f. \qedhere
\end{equation*}
\end{proof}

Note that we in general obtain a complex-valued partial differential equation, but as a special case the ground-state transformation used, e.g., in \cite{Risken89, Pav14}.

\begin{corollary}
Assuming the ground state is strictly positive, i.e., $ S \equiv 0 $, this yields
\begin{equation*}
    \mathcal{L} f = \nabla R \cdot \nabla f + \frac{1}{2} \Delta f
\end{equation*}
and $ \mathcal{L} $ is the generator of a drift-diffusion process \eqref{eq:drift-diffusion process} with potential $ V(x) = -R(x) $ and diffusion constant $ \beta^{-1} = \frac{1}{2} $.
\end{corollary}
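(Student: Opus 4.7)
The plan is to apply Lemma~\ref{lem:Ground-state transformation} directly and then match coefficients with the generator of the drift-diffusion process~\eqref{eq:drift-diffusion process}. There is no genuine obstacle here; the corollary is essentially a matter of substitution and identification.

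First I would note that the hypothesis $\psi_0 > 0$ is equivalent, under the ansatz $\psi_0 = e^{R + iS}$, to the choice of branch $S \equiv 0$ (so that $\psi_0 = e^R > 0$), justifying why we may set $S \equiv 0$ in the formula from Lemma~\ref{lem:Ground-state transformation}. Substituting $S \equiv 0$ into
\[
    \mathcal{L} f = \big(\nabla R + i \ts \nabla S \big) \cdot \nabla f + \frac{1}{2} \Delta f
\]
immediately gives the claimed expression $\mathcal{L} f = \nabla R \cdot \nabla f + \tfrac{1}{2} \Delta f$.

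Next I would compare this with the generator of~\eqref{eq:drift-diffusion process}, which has the form $\mathcal{L} f = -\nabla V \cdot \nabla f + \beta^{-1} \Delta f$. Matching the first-order term yields $-\nabla V = \nabla R$, i.e.\ $V = -R$ up to an additive constant (which is physically irrelevant and can be absorbed into $R$ by an overall normalization of $\psi_0$). Matching the second-order term yields $\beta^{-1} = \tfrac{1}{2}$. This identifies $\mathcal{L}$ as the generator of the drift-diffusion SDE $\mathrm{d}X_t = -\nabla V(X_t)\,\mathrm{d}t + \sqrt{2\beta^{-1}}\,\mathrm{d}B_t = \nabla R(X_t)\,\mathrm{d}t + \mathrm{d}B_t$, completing the proof.
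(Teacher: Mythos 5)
Your proof is correct and follows the same route the paper implicitly intends: the corollary carries no separate proof in the paper because it is exactly the substitution $S \equiv 0$ into Lemma~\ref{lem:Ground-state transformation} followed by matching against the generator $\mathcal{L} f = -\nabla V \cdot \nabla f + \beta^{-1} \Delta f$ of~\eqref{eq:drift-diffusion process}. Your remarks about the branch choice and the additive constant in $V$ are sensible clarifications but do not change the argument.
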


The corollary implies that we can compute excited states of the quantum system by computing eigenfunctions of the Koopman operator associated with the stochastic differential equation. This will be analyzed in more detail below. Furthermore, if we instead solve $ \mathcal{H} (\psi_0^{-1} \ts f) = E (\psi_0^{-1} \ts f) $, we obtain an operator
\begin{align*}
    \mathcal{L}^* f = -\big(\Delta R + i \ts \Delta S \big) f - \big(\nabla R + i \ts \nabla S \big) \cdot \nabla f + \frac{1}{2} \Delta f.
\end{align*}
If $ S \equiv 0 $, this is the standard Fokker--Planck equation for the drift-diffusion process derived above, whose invariant density is $ \rho_0 = e^{2 \ts R} $. For the complex-valued case, we obtain an eigenfunction of the form $ \rho_0 = e^{2 \ts (R + i \ts S)} $. Before moving on, we make the important observation that the ground-state transformation is not limited to the stationary case.

\begin{lemma} \label{lem:Transformation TDSE}
We can also apply the transformation used in Lemma~\ref{lem:Ground-state transformation} to the time-dependent Schrödinger equation:
\begin{enumerate}[wide, itemindent=\parindent, itemsep=0ex, topsep=0.5ex, label=\roman*)]
\item Assume that $ \psi_0 $ is a solution of the time-dependent Schrödinger equation \eqref{eq:TDSE}. Then it holds that $ i \pd{}{t} (\psi_0 \ts f) = \mathcal{H} (\psi_0 \ts f) $ if and only if $ i \pd{f}{t} = -\mathcal{L} f $.
\item For the imaginary-time Schrödinger equation \eqref{eq:it TDSE} it holds that $ \pd{}{\tau} (\psi_0 \ts f) = -\mathcal{H} (\psi_0 \ts f) $ if and only if $ \pd{f}{\tau} = \mathcal{L} f $.
\end{enumerate}
\end{lemma}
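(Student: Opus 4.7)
The plan is to mimic the algebraic manipulation used in the proof of Lemma~\ref{lem:Ground-state transformation}, the only new feature being that $\psi_0$ now depends on time, so its time derivative does not vanish and must be tracked. The key observation is that applying the product rule to both sides produces a term $(\mathcal{H}\psi_0)\ts f$: on the left it appears because $\psi_0$ satisfies the (imaginary-time) Schrödinger equation, and on the right it appears from the $\Delta \psi_0 \ts f$ and $W\psi_0 \ts f$ pieces of $\mathcal{H}(\psi_0 \ts f)$. These two copies cancel, leaving an equation involving only derivatives of $f$, which after dividing by $\psi_0$ is recognized as $\pm\mathcal{L} f$ via the identity $\psi_0^{-1} \nabla \psi_0 = \nabla R + i \ts \nabla S$ used already in Lemma~\ref{lem:Ground-state transformation}.

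For part (i), I would start from
\begin{equation*}
i \pd{}{t}(\psi_0 \ts f) = i (\partial_t \psi_0)\ts f + i \psi_0 \ts \partial_t f = (\mathcal{H}\psi_0)\ts f + i \ts \psi_0 \ts \partial_t f,
\end{equation*}
where the second equality uses \eqref{eq:TDSE} for $\psi_0$. For the right-hand side I would reuse the computation from Lemma~\ref{lem:Ground-state transformation},
\begin{equation*}
\mathcal{H}(\psi_0 \ts f) = (\mathcal{H}\psi_0)\ts f - \nabla \psi_0 \cdot \nabla f - \tfrac{1}{2}\ts \psi_0 \ts \Delta f .
\end{equation*}
Subtracting, the terms $(\mathcal{H}\psi_0)\ts f$ cancel, so $i\ts \partial_t(\psi_0 \ts f) = \mathcal{H}(\psi_0 \ts f)$ is equivalent to $i \ts \psi_0 \ts \partial_t f = -\nabla \psi_0 \cdot \nabla f - \tfrac{1}{2}\ts \psi_0 \ts \Delta f$. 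Dividing through by $\psi_0$ (nonvanishing by assumption) and invoking $\psi_0^{-1}\nabla \psi_0 = \nabla R + i\ts \nabla S$ yields $i \ts \partial_t f = -\mathcal{L} f$. Since every step is an equivalence, both directions follow.

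Part (ii) is structurally identical, but without the factor of $i$ and with a sign change from \eqref{eq:it TDSE}. Writing $\partial_\tau(\psi_0 \ts f) = (\partial_\tau \psi_0)\ts f + \psi_0 \ts \partial_\tau f = -(\mathcal{H}\psi_0)\ts f + \psi_0 \ts \partial_\tau f$ and combining with the same expansion of $\mathcal{H}(\psi_0 \ts f)$, the $(\mathcal{H}\psi_0)\ts f$ contributions again cancel, and division by $\psi_0$ gives $\partial_\tau f = (\nabla R + i \ts \nabla S)\cdot \nabla f + \tfrac{1}{2} \Delta f = \mathcal{L} f$.

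There is no real obstacle here: the calculation is essentially a time-dependent repackaging of Lemma~\ref{lem:Ground-state transformation}, and the only thing worth remarking on is that the stationary eigenvalue gap $E - E_0$ that appeared before is replaced precisely by the time-derivative term $i \ts \partial_t f$ (respectively $\partial_\tau f$), which is consistent with \eqref{eq:TDSE solution} since for a stationary solution $f$ the ansatz $f(x,t) = f(x) e^{-i(E-E_0)t}$ reduces (i) back to the statement of Lemma~\ref{lem:Ground-state transformation}.
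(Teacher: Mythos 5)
Your proof is correct and follows essentially the same route as the paper: the paper's proof simply notes that the product rule $\partial_t(\psi_0 f) = (\partial_t\psi_0)f + \psi_0\,\partial_t f$ is the only new ingredient beyond Lemma~\ref{lem:Ground-state transformation} and handles part (ii) by setting $\tau = it$, while you spell out the same cancellation of the $(\mathcal{H}\psi_0)\ts f$ terms explicitly and carry out part (ii) directly. Your closing observation that $f(x,t)=f(x)\ts e^{-i(E-E_0)t}$ recovers Lemma~\ref{lem:Ground-state transformation} is a nice sanity check, though it is not part of the paper's argument.
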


\begin{proof}
The only difference is that we now have to compute $ \pd{}{t}(\psi_0 \ts f) = \pd{\psi_0}{t} f + \psi_0 \ts \pd{f}{t} $. For the second part, we set $ \tau = i \ts t $.
\end{proof}

\begin{example} \label{ex:1D problems}
As an illustration, let us consider a few one-dimensional problems that can be solved analytically and derive stochastic models. An overview of the potentials $ W $ and the corresponding potentials $ V $ of the associated drift-diffusion processes is shown in Figure~\ref{fig:Comparison}.

\begin{figure}
    \definecolor{matlab1}{RGB}{0, 114, 189}
    \definecolor{matlab2}{RGB}{217, 83, 25}
    \definecolor{matlab3}{RGB}{237, 177, 32}
    \definecolor{matlab4}{RGB}{126, 47, 142}
    \newcommand{\cdash}[1]{\textcolor{#1}{\rule[0.5ex]{1em}{0.3ex}}}
    \centering
    \begin{minipage}{0.35\textwidth}
        \centering
        \subfiguretitle{(a)}
        \includegraphics[width=\textwidth]{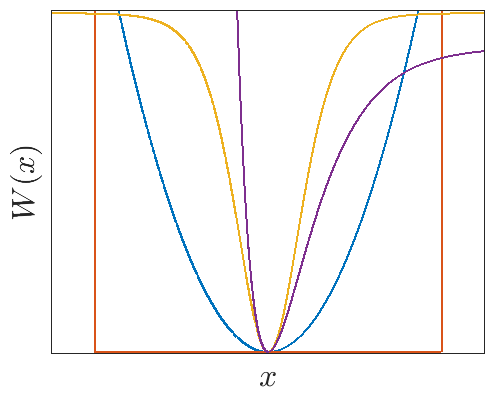}
    \end{minipage}
    \hspace*{3ex}
    \begin{minipage}{0.35\textwidth}
        \centering
        \subfiguretitle{(b)}
        \includegraphics[width=\textwidth]{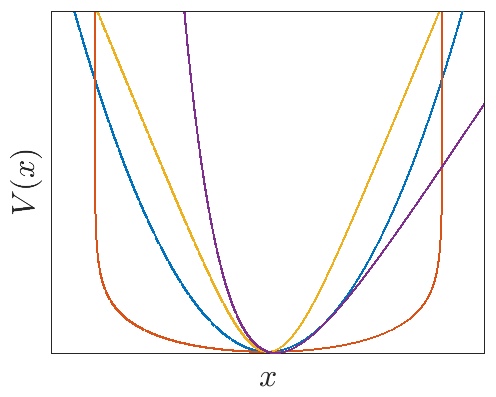}
    \end{minipage}
    \caption{(a) Potentials $ W $ of analytically solvable one-dimensional quantum systems, where \cdash{matlab1} represents the quantum harmonic oscillator, \cdash{matlab2} the particle in a box, \cdash{matlab3} the Pöschl--Teller potential, and \cdash{matlab4} the Morse potential. (b)~Potentials $ V $ of the corresponding drift-diffusion processes. Some functions were shifted for the sake of comparison. The Morse potential is the only non-symmetric system.}
    \label{fig:Comparison}
\end{figure}

\begin{enumerate}[wide, itemindent=\parindent, itemsep=0ex, topsep=0.5ex, label=\roman*)]
\item The correspondence between the \emph{quantum harmonic oscillator} and the Ornstein--Uhlen\-beck process is well-known, see, e.g., \cite{Risken89, Pav14}. The potential of the system is given by $ W(x) = \frac{1}{2} \ts \omega^2 x^2 $, with angular frequency $ \omega $. The eigenvalues of the Schrödinger operator are given by $ E_\ell = \omega \left(\ell + \frac{1}{2}\right) $ and the eigenfunctions by
\begin{equation*}
    \psi_\ell(x) = \tfrac{1}{\sqrt{2^\ell\,\ell!}} \left(\tfrac{\omega}{\pi}\right)^{\nicefrac{1}{4}} e^{-\frac{\omega \ts x^2}{2}} H_\ell \left(\sqrt{\omega} \ts x \right),
\end{equation*}
where $ H_\ell $ is the $\ell$th physicists' Hermite polynomial. The corresponding stochastic differential equation is
\begin{equation*}
    \mathrm{d}X_t = -\omega \ts X_t \ts \mathrm{d}t + \mathrm{d}B_t.
\end{equation*}
The eigenvalues and eigenfunctions of the associated Koopman generator are $ \lambda_\ell = -\omega \ts \ell $ and $ \varphi_\ell(x) = H_\ell\left(\sqrt{\omega} \ts x\right) $. Using Lemma~\ref{lem:Ground-state transformation}, we obtain $ -\lambda_\ell = E_\ell - E_0 $ so that $ E_\ell = \frac{1}{2} \ts \omega + \omega \ts \ell $. This illustrates how stochastic differential equations can be used to compute higher-energy states, see \cite{KNH20} for a detailed analysis.

\item For the \emph{particle in a box}, the potential is defined by
\begin{equation*}
    W(x) =
    \begin{cases}
        0, & 0 \le x \le L, \\
        \infty, & \text{otherwise}.
    \end{cases}
\end{equation*}
We obtain the eigenvalues $ E_\ell = \frac{\pi^2 (\ell+1)^2}{2 L^2} $ and the eigenfunctions $ \psi_\ell = \sqrt{\frac{2}{L}} \sin\left( \frac{\pi (\ell + 1)}{L} \ts x \right) $, for $ \ell = 0, 1, 2, \dots $, see \cite{Risken89, Hall13}. Hence, $ R(x) = \log \left( \sin \left( \frac{\pi}{L} x \right) \right) $ for $ x \in (0, L) $ so that
\begin{equation*}
    \mathrm{d}X_t = \tfrac{\pi}{L} \cot \left( \tfrac{\pi}{L} x \right) \ts \mathrm{d}t + \mathrm{d}B_t.
\end{equation*}
Note that $ R(x) \to -\infty $ for $ x \to 0^+ $ and $ x \to L^- $.

\item The \emph{Pöschl--Teller potential} is defined by
\begin{equation*}
    W_s(x) = -\frac{s(s+1)}{2} \sech^2(x),
\end{equation*}
where $ s \in \mathbb{N} $ is a parameter determining its depth \cite{BH18:PoeschlTeller}. The ground state is $ \psi_0(x) = \sech^s(x) $ with $ E_0 = -\frac{s^2}{2} $. Transforming this system, we obtain $ R(x) = s \ts \log(\sech(x)) $ and the drift-diffusion process
\begin{equation*}
    \mathrm{d}X_t = -s \ts \tanh(X_t) \ts \mathrm{d}t + \mathrm{d}B_t.
\end{equation*}
Numerical results will be presented in Section~\ref{sec:Data-driven methods for quantum systems}.
\item Also for the \emph{Morse potential}
the transformations can be carried out analytically. The resulting potential $ V $ for the drift-diffusion process is shown in Figure~\ref{fig:Comparison}. \exampleSymbol
\end{enumerate}

\end{example}

The fact that every analytically solvable Schrödinger equation can also serve as an example of a solvable Fokker--Planck equation was already utilized in \cite{Risken89}. Conversely, given a drift-diffusion process with potential $ V $, inverse temperature $ \beta^{-1} = \frac{1}{2} $, and unique invariant density $ \rho_0 = e^{-\beta V} $, then
\begin{equation*}
    \rho_0^{\nicefrac{1}{2}} \mathcal{L}\big(\rho_0^{-\nicefrac{1}{2}} \psi\big)
        = -\left[\left(\frac{\beta}{4} \abs{\nabla V}^2 - \frac{1}{2} \Delta V\right) \psi - \beta^{-1} \Delta \psi\right]
        = -\mathcal{H} \psi,
\end{equation*}
is a Schrödinger operator with potential
\begin{equation*}
    W = \frac{\beta}{4} \abs{\nabla V}^2 - \frac{1}{2} \Delta V.
\end{equation*}
Note that the potential $ W $ is defined in such a way that the ground state energy is zero.

\begin{remark} \label{rem:Invariant potentials}
An interesting question then is which potentials are invariant under this transformation. Using the ansatz $ V(x) = \frac{1}{2} x^\top A x + b^\top x + c $, we obtain, e.g., solutions of the form $ A = 2 \ts \beta^{-1} I $,  $ b $ can be arbitrary, and $ c = \frac{\beta}{4} b^\top b - \frac{1}{2} \trace(A) = \frac{\beta}{4} b^\top b - \beta^{-1} d $, where $ d $ is the dimension of the problem. This corresponds to $ d $ uncoupled quantum harmonic oscillators (or Ornstein--Uhlenbeck processes) whose potentials might be shifted along the $ x $ and $ y $ axes.
\end{remark}

\subsection{Nelson's stochastic mechanics}
\label{subsec:nelson_mechanics}

Nelson's stochastic mechanics~\cite{Nelson66} is one of the most general stochastic formulations of quantum mechanics. The idea is to determine a real-valued stochastic differential equation such that the distribution of $ X_t $ equals the quantum probability distribution $ \rho(x, t) = \abs{\psi(x, t)}^2 $. While Nelson originally aimed at deriving quantum mechanics assuming only that particles follow diffusion processes in configuration space \cite{Bacciagaluppi99, Bacciagaluppi05}, the stochastic dynamics can also be obtained from a complex-valued solution of the time-dependent Schrödinger equation~\eqref{eq:TDSE}, which Nelson described as ``quantum mechanics made difficult''~\cite{Nelson85}. In this section, we will assume that a reference solution $ \psi $ is known. Given a solution $ \psi = e^{R + i \ts S} $, we define
\begin{equation*}
    u = \nabla R
    \quad \text{and} \quad
    v = \nabla S,
\end{equation*}
where $ u $ is called \emph{osmotic velocity} and $ v $ \emph{current velocity}. According to the theory of Brownian motion, $ u $ is the velocity acquired by a Brownian particle that is in equilibrium with respect to an external force, balancing the osmotic force \cite{Nelson66}. As outlined in Appendix~\ref{app:Continuity equation}, the quantum probability $ \rho $ satisfies the continuity equation
\begin{equation} \label{eq:continuity}
    \pd{\rho}{t} = - \nabla \cdot j,
\end{equation}
with the probability current $ j = v \ts \rho $. This justifies the names osmotic and current velocity for $ u $ and $ v $, respectively. Based on these velocity definitions, Nelson constructs a stochastic differential equation as follows: The drift term, which is called \emph{mean forward velocity} in this context, is given by
\begin{equation*}
    b = u + v
\end{equation*}
and the diffusion term is defined by $ \sigma = 1 $, which implies that $ \beta^{-1} = \frac{1}{2} $. This system can be viewed as a drift-diffusion process with corresponding potential $ V = -R - S $.

\begin{remark} \label{SDE equivalence}
A few remarks are in order:
\begin{enumerate}[wide, itemindent=\parindent, itemsep=0ex, topsep=0.5ex, label=\roman*)]
\item Several equivalent expressions for $ u $ and $ v $ are used in the literature. We could also define
\begin{equation*}
    u = \Re\left(\frac{\nabla \psi}{\psi}\right) = \Re(\nabla \log \psi)
    \quad \text{and} \quad
    v = \Im\left(\frac{\nabla \psi}{\psi}\right) = \Im(\nabla \log \psi).
\end{equation*}
Bacciagaluppi \cite{Bacciagaluppi05}, on the other hand, writes $ \psi = \widetilde{R} \ts e^{i \ts S} $ so that $ u = \frac{1}{2} \frac{\nabla \widetilde{R}^2}{\widetilde{R}^2} = \frac{\nabla \widetilde{R}}{\widetilde{R}} = \nabla \log \widetilde{R} $.
\item While in Nelson's original derivation $ \beta^{-1} = \frac{1}{2} $, it is possible to construct alternative approaches with arbitrary positive diffusion constants as discussed in \cite{Bacciagaluppi05}, see also Appendix~\ref{app:Continuity equation}.
\end{enumerate}
\end{remark}

\subsection{Comparison}

The ground-state transformation (Section~\ref{subsec:ground_state_transform}) and Nelson's mechanics (Section~\ref{subsec:nelson_mechanics}) both provide stochastic differential equations based on a given wave function $\psi$. However, they differ with regard to the specific settings where they can be meaningfully applied and with regard to the kind of additional information that can be gained from analyzing these stochastic systems. We summarize the differences as follows:

\begin{enumerate}[wide, itemindent=\parindent, itemsep=0ex, topsep=0.5ex, label=\roman*)]
\item The ground-state transformation can be applied to a given positive solution of the stationary Schrödinger equation~\eqref{eq:TISE} or of the imaginary-time equation~\eqref{eq:it TDSE}. In either case, additional solutions of the same Schrödinger equation can be determined by solving the corresponding Kolmogorov equation, which can be achieved using Koopman operator based methods. For the time-dependent Schrödinger equation in real time~\eqref{eq:TDSE}, the stochastic differential equation becomes complex-valued and the applicability of Koopman methods is unclear.

\item Nelson's mechanics provides a stochastic differential equation which allows us to track the evolution of the quantum probability $\rho$ for a non-zero solution of the Schrödinger equation in real time~\eqref{eq:TDSE}, which also includes the stationary equation~\eqref{eq:TISE} by means of~\eqref{eq:TDSE solution}. Koopman operator methods can then be used to analyze these dynamics further, as shown in Section~\ref{subsec:cca_example}, however, additional solutions to the Schrödinger equation cannot be obtained directly from such models.

\item Given a positive, real-valued solution of~\eqref{eq:TISE}, if we construct a time-dependent solution $\psi$ by~\eqref{eq:TDSE solution}, then we clearly have $\nabla S \equiv 0$ in the decomposition $\psi = e^{R + iS}$. Thereby we see that Nelson's mechanics and the ground-state transformation lead to the same system in this particular case.
\end{enumerate}

Finally, we briefly discuss the applicability of Nelson's stochastic differential equation (or the ground state system) in the stationary case, if the wave function is required to be anti-symmetric and therefore possesses zeros by necessity. Let $ \mathcal{N} $ be the nodal set of the ground state $ \psi $, i.e., $ \psi(x) = 0 \quad \forall x\in \mathcal{N} $, and let $\mathcal{N}^c$ be the complement of the nodal set. If the tiling property holds \cite{Ceperley1991}, then by anti-symmetry, we can write $ \psi =\pm e^{R} $, where the function $R = \log \abs{\psi}$ is the same within each connected component of $\mathcal{N}^c$.  Up to permutation, we therefore obtain the same stationary and ergodic stochastic differential equation within each component. The nodal set acts as a barrier between each of these dynamics, as $-R(x) \rightarrow \infty$ as $x$ approaches the nodal set, and it can be shown that the process in each component has zero probability of entering the nodal set \cite{Nelson66, Bacciagaluppi99}. Moreover, eigenfunctions of the generator will be the same up to permutations. By anti-symmetrizing appropriately, additional solutions of the stationary Schrödinger equation can then be obtained from analyzing Nelson's stochastic differential equation within any connected component of~$\mathcal{N}^c$. Data-driven methods that take symmetry and antisymmetry constraints into account have been recently proposed in \cite{KGNN21}.

\section{Optimal control formulation for the Schrödinger equation}
\label{sec:Optimal control formulation}

The stochastic formulations we have previously considered require the knowledge of at least one specific solution of the time-dependent (or time-independent) Schrödinger equation, either in real or imaginary time. In what follows, we will consider an optimal control problem that will actually allow us to compute a solution in imaginary time, requiring only knowledge of the potential energy and the initial condition.

\subsection{Hamilton--Jacobi--Bellman equation and Feynman--Kac formula}
\label{subsec:optimal_control}

We assume the state space is $\mathbb{X} = \mathbb{R}^d$, while noting that extensions to more general subsets of $\mathbb{R}^d$ are possible. First, we observe that for a positive solution $\psi = \psi(x, \tau)$ of the imaginary-time Schrödinger equation on $\mathbb{R}^d$, satisfying the initial condition $\psi(x, 0) = \psi_0(x)$, its negative logarithm satisfies a nonlinear partial differential equation (see also~\cite{Fleming2006}[Ch.\ VI.3] and Appendix~\ref{app:Continuity equation}). In this section, the symbol $\| \cdot \|$ always denotes the Euclidean vector norm.

\begin{lemma} \label{lem:log_transform_schroedinger}
Assume $\psi(x, \tau) = e^{R(x, \tau)} > 0$ solves~\eqref{eq:it TDSE}, and define $V = -R $. Then $V$ solves the nonlinear partial differential equation
\begin{equation*}
    \pd{V}{\tau} = \frac{1}{2}\Delta V - \frac{1}{2} \|\nabla V\|^2 + W.
\end{equation*}
\end{lemma}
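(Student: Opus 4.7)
The plan is to carry out a direct Cole--Hopf-style logarithmic substitution. Since $\psi = e^R$ is assumed strictly positive, $R = \log \psi$ is well-defined and smooth wherever $\psi$ is, so we may freely differentiate.

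First, I would compute the elementary derivatives of $\psi = e^R$: namely $\partial_\tau \psi = e^R \, \partial_\tau R$, $\nabla \psi = e^R \, \nabla R$, and $\Delta \psi = e^R \bigl(\Delta R + \|\nabla R\|^2\bigr)$. The last identity is the only one that requires a moment's care, but it follows from $\Delta (e^R) = \nabla \cdot (e^R \nabla R) = e^R \|\nabla R\|^2 + e^R \Delta R$.

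Next I would substitute these into the imaginary-time Schrödinger equation $\partial_\tau \psi = -\mathcal{H}\psi = \tfrac{1}{2}\Delta \psi - W \psi$. After dividing through by the nonvanishing factor $e^R$, one obtains
\begin{equation*}
    \pd{R}{\tau} = \frac{1}{2}\Delta R + \frac{1}{2}\|\nabla R\|^2 - W.
\end{equation*}
Finally, substituting $V = -R$ (so that $\partial_\tau R = -\partial_\tau V$, $\Delta R = -\Delta V$, and $\|\nabla R\|^2 = \|\nabla V\|^2$) and multiplying by $-1$ yields exactly the claimed equation for $V$.

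There is no real obstacle here: the argument is a one-line computation once the identity for $\Delta(e^R)$ is in hand. The only thing to note is that positivity of $\psi$ is essential to make sense of $R = \log \psi$ globally and to divide by $e^R$; this is assumed in the hypothesis, so no additional justification is needed.
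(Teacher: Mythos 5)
Your proof is correct and follows essentially the same route as the paper: compute $\Delta(e^R)$ via the product rule, substitute into the imaginary-time Schrödinger equation, and divide by the nonvanishing $\psi$. The only cosmetic difference is that you first derive the PDE for $R$ and then substitute $V = -R$, whereas the paper carries out the computation directly in terms of $V = -R$; the content is identical.
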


\begin{proof}
Using that $-\frac{1}{2}\Delta \psi =  \psi\left[\frac{1}{2}\Delta V - \frac{1}{2} \|\nabla V\|^2 \right]$, we find:
\begin{equation*}
    0 = \pd{\psi}{\tau} + \mathcal{H}\psi = \psi\left[-\pd{V}{\tau} + \frac{1}{2}\Delta V - \frac{1}{2} \|\nabla V\|^2 + W\right].
\end{equation*}
Dividing by $\psi$, we obtain the result.
\end{proof}

The partial differential equation in Lemma~\ref{lem:log_transform_schroedinger} is well-known as the \emph{Hamilton--Jacobi--Bellman equation} (HJB equation) in stochastic optimal control. It is associated with the following control problem:
\begin{equation} \label{eq:optimal_control_problem}
\begin{split}
J(x, \tau) = \min_{\substack{u \colon [\tau, T] \to \mathbb{R}^d \\ u \in \mathcal{U}}} \, & \mathbb{E}^x \left[ \int_\tau^{T} W(X_s^u) + \frac{1}{2} \|u(s) \|^2 \, \mathrm{ds} - \log(\psi_0(X_T^u)) \right] \\
\text{s.t.} \quad \mathrm{d}X_s^u &= u(s) \, \mathrm{d}s + \mathrm{d}B_s, \quad X_\tau^u = x.
\end{split}
\end{equation}
Here, $\mathcal{U}$ is a given class of admissible controls $u$, see~\cite{Fleming2006}[Ch. IV.2] for the technical details. In a control context, the potential $W$ would be called \emph{running cost}, while $-\log(\psi_0)$ is called \emph{terminal cost}. The function $J$, finally, is called \emph{value function} of the optimal control problem.

If the HJB equation possesses a strong solution, then this solution in fact equals the value function $J$, and the wave function can be inferred from the control problem~\eqref{eq:optimal_control_problem}:

\begin{theorem}[\cite{Fleming2006}, Theorem VI.4.1]
\label{thm:bellman_eq}
Let $V \in C^{2,1}(\mathbb{R}^d, [0, T])$ be a strong solution to the HJB equation
\begin{align}
\label{eq:hjb_pde}
- \pd{V(x, \tau)}{\tau} &= \frac{1}{2} \Delta V(x, \tau) + W(x) - \frac{1}{2} \|\nabla V(x, \tau)\|^2, \\
\nonumber V(x, T) &= -\log(\psi_0(x)),
\end{align}
satisfying a linear growth condition $\|\nabla V\| \leq C(1 + \|x\|)$. Then $V = J$ for the optimal control problem~\eqref{eq:optimal_control_problem}, and it follows that $\psi(x, \tau) = \exp(-J(x, T - \tau))$ solves the Schrödinger equation~\eqref{eq:it TDSE} with initial condition $\psi(x, 0) = \psi_0(x)$. The optimal Markov control policy is given by $u^*(x, \tau) = \nabla J(x, \tau)$. 
\end{theorem}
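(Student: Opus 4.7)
The plan is a classical verification argument from stochastic optimal control, combined with a time reversal that converts $V$ into a solution of the log-transformed Schrödinger equation of Lemma~\ref{lem:log_transform_schroedinger}. The argument has three stages: show that $V$ underestimates the cost of every admissible control, exhibit a specific Markov feedback that attains this bound, and then invert the log-transform to recover $\psi$.

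For the lower bound, fix any $u \in \mathcal{U}$ and let $X^u$ denote the controlled trajectory started at $X^u_\tau = x$. Applying Itô's formula to $V(X^u_s, s)$ on $[\tau, T]$ gives
\[
    dV(X^u_s, s) = \bigl[\partial_\tau V + u(s) \cdot \nabla V + \tfrac{1}{2}\Delta V\bigr](X^u_s, s)\, ds + \nabla V(X^u_s, s) \cdot dB_s.
\]
Substituting the HJB equation~\eqref{eq:hjb_pde} into the drift and completing the square in $u$ yields the pointwise identity
\[
    \partial_\tau V + u \cdot \nabla V + \tfrac{1}{2}\Delta V + W + \tfrac{1}{2}\|u\|^2 = \tfrac{1}{2}\|u + \nabla V\|^2 \geq 0,
\]
with equality precisely when $u = -\nabla V$. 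Under the linear growth hypothesis on $\nabla V$, standard moment bounds on $X^u$ show that the stochastic integral is a true martingale with zero mean; integrating from $\tau$ to $T$, taking expectation, and using the terminal condition $V(\cdot, T) = -\log \psi_0$ therefore yields
\[
    V(x,\tau) \leq \mathbb{E}^x\!\left[-\log\psi_0(X^u_T) + \int_\tau^T W(X^u_s) + \tfrac{1}{2}\|u(s)\|^2 \, ds\right] = J^u(x,\tau).
\]
Taking the infimum over $u \in \mathcal{U}$ gives $V \leq J$.

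For the matching upper bound, take the Markov feedback $u^*(x,s) = -\nabla V(x,s)$. The linear growth bound on $\nabla V$ ensures that the closed-loop SDE $dX^*_s = -\nabla V(X^*_s, s)\, ds + dB_s$ has a unique strong solution, so $u^*$ is admissible, and along $X^*$ the square-completion term above vanishes identically, turning the inequality into an equality: $V(x,\tau) = J^{u^*}(x,\tau) \geq J(x,\tau)$. Hence $V = J$. Finally, setting $\tilde V(x,\tau) := V(x, T - \tau) = J(x, T-\tau)$ converts the backward HJB~\eqref{eq:hjb_pde} into the forward equation of Lemma~\ref{lem:log_transform_schroedinger} via a direct chain-rule computation. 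That lemma then implies that $\psi(x,\tau) := \exp(-\tilde V(x,\tau)) = \exp(-J(x, T-\tau))$ solves~\eqref{eq:it TDSE}, and $\psi(x,0) = \exp(-V(x,T)) = \psi_0(x)$ recovers the initial condition.

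The main technical obstacle is not conceptual but concerns integrability: one must ensure that the stochastic integral in the Itô expansion actually vanishes in expectation and that the closed-loop SDE under $u^*$ is globally well-posed. Both issues reduce to Grönwall-type moment bounds on $X^u$, which is precisely what the linear growth hypothesis $\|\nabla V\| \leq C(1 + \|x\|)$ buys; without that bound one would have to localize with stopping times and then pass to the limit, a technically heavier but standard remedy.
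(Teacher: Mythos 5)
Your argument is the standard verification theorem for a stochastic optimal control problem, which is precisely the route taken in Fleming and Soner (the paper cites Theorem~VI.4.1 rather than reproving it). The Itô expansion, the completion of the square $\tfrac{1}{2}\|u+\nabla V\|^2$, the martingale argument under the linear growth hypothesis, and the time reversal $\tilde V(x,\tau) = V(x,T-\tau)$ feeding into Lemma~\ref{lem:log_transform_schroedinger} are all correct and are the natural ingredients. Two small remarks on the bookkeeping: you use Lemma~\ref{lem:log_transform_schroedinger} in the converse direction (from the HJB solution to the Schrödinger solution), which is legitimate because the identity $\partial_\tau\psi + \mathcal{H}\psi = \psi\bigl[-\partial_\tau V + \tfrac{1}{2}\Delta V - \tfrac{1}{2}\|\nabla V\|^2 + W\bigr]$ holds for any smooth positive $\psi = e^{-V}$ — worth stating explicitly since the lemma is phrased as a one-way implication. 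And you should note that the optimal control path $s \mapsto -\nabla V(X^*_s,s)$ lies in the admissible class $\mathcal{U}$; the linear growth hypothesis on $\nabla V$ is exactly what ensures this.

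There is, however, a sign discrepancy you did not flag. Your completion of the square gives $u^* = -\nabla V = -\nabla J$, whereas the theorem as stated claims $u^*(x,\tau) = \nabla J(x,\tau)$ — the opposite sign. Your derivation is the correct one: minimizing $u\cdot\nabla V + \tfrac{1}{2}\|u\|^2$ over $u$ gives $u = -\nabla V$, and the paper's own Appendix~\ref{app:stabilized_control} independently arrives at the criticality condition $G(x)\nu^* = -\nabla V(x,\tau)$, i.e.\ $u^* = -\nabla V$. The theorem statement (and the sentence following it, where the drift of the ground-state-transformed SDE is written as $-\nabla\psi/\psi$ rather than $+\nabla\psi/\psi = \nabla\log\psi$) therefore appears to carry a sign error. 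When a blind derivation produces a result that contradicts a claimed statement, you should stop and reconcile the two rather than silently moving on — in this case your computation is right and the discrepancy points to a typo in the source.
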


The key to solving the control problem~\eqref{eq:optimal_control_problem} is to estimate expectations of specific observables at all time instances in the control interval $[\tau, T]$, which can then be integrated due to linearity of the expectation. This can be achieved efficiently using Koopman-based methods, as we will explain in Section~\ref{subsec:solution_control_problem}. Let us also spell out the connection between the control formulation and the transformation in Lemma~\ref{lem:Transformation TDSE}. As $\psi(x, \tau) = \exp(-J(x, T - \tau))$ obtained from the value function is a solution of the Schrödinger equation in imaginary time, we can directly apply Lemma~\ref{lem:Transformation TDSE} to $\psi$. The drift of the resulting stochastic differential equation is
\begin{equation*}
    -\frac{\nabla \psi}{\psi}(x, \tau) = - \nabla \log(\psi)(x, \tau) = \nabla J(x, T - \tau) = u^*(x, T - \tau).
\end{equation*}
The transformation thus leads to the generator of the optimally controlled stochastic differential equation for the problem~\eqref{eq:optimal_control_problem}.

\begin{remark} 
From the Girsanov theorem~\cite{Oksendal2013} and Jensen's inequality, it follows that the value function $J$ always overestimates the negative logarithm $V$ of the wave function $\psi$ (see \cite{Fleming2006}[Ch. VI]):
\begin{equation} \label{eq:cert_equiv_expectation}
\begin{split}
J(x, \tau) &= \min_{u \in \mathcal{U}} \mathbb{E}^x \left[ \int_\tau^{T} W(X_s^u) + \frac{1}{2} \|u(s) \|^2 \, \mathrm{ds} - \log(\psi_0(X_T^u)) \right] \\
&\geq -\log\left(\mathbb{E}^{x}\left[ \exp\left[ - \int_\tau^T W(B_s^x)\,\mathrm{d}s\right] \ts \psi_0(B_T^x)\right]\right) = -\log(\psi(x, T - \tau)) \\
 &= V(x, T - \tau).
\end{split}
\end{equation}
The second equality in~\eqref{eq:cert_equiv_expectation} is known as the \emph{Feynman--Kac formula} \cite{Feynman1948,Kac1949}, highlighting yet another stochastic interpretation of Schrödinger's equation. However, as the expectation in the Feynman--Kac formula acts on a nonlinear function of the time integral, Koopman methods are not directly applicable. Therefore, we prefer the control formulation~\eqref{eq:optimal_control_problem}.
\end{remark}

\subsection{Solution of the control problem}
\label{subsec:solution_control_problem}

A wealth of numerical methods for the solution of the control problem~\eqref{eq:optimal_control_problem} exist, for instance using dynamic programming \cite{BS96,FR12} or Monte Carlo sampling \cite{Rod07}. More recently, reformulations as deterministic control problems via the Koopman generator, which allow for a significant reduction of the complexity, were proposed \cite{KNPNCS20, POR20, NPP+21}. For an application to quantum control, see \cite{GKD+21}.
In particular for control-affine systems, i.e.,
\begin{equation} \label{eq:ControlSDE}
    \mathrm{d}X_s^u = \left(b(X_s^u) + G(X_s^u)u(s)\right) \ts \mathrm{d}s + \sigma(X_s^u) \ts \mathrm{d}B_s,
\end{equation}
and any observable function $\phi$, the linearity of the Koopman generator allows us to construct a deterministic bilinear surrogate model for the observed quantity $z(s) = \mathbb{E}^x[\phi(X_s^u)]$ using a finite set of Koopman generators \cite{POR20}.
Assuming that $\mathrm{dim}(u)=d$, we introduce the finite set $U=\{0, e_1, \ldots, e_d\}$, where the $e_j$ are the standard Euclidean basis vectors. By fixing the control in \eqref{eq:ControlSDE} to the elements of $U$, we obtain $d+1$ autonomous systems as well as $d+1$ associated Koopman generators
\begin{equation*}
\mathcal{L}_{U_j} f = \left(b + G \ts U_j \right)\cdot \nabla f + \frac{1}{2} a : \nabla^2 f, \qquad j=0,\ldots,d.
\end{equation*}
Introducing
\begin{equation*}
    \mathcal{A} = \mathcal{L}_{0} \quad\text{and}\quad \mathcal{B}_j = \mathcal{L}_{U_j} - \mathcal{L}_{0},
\end{equation*}
for $ j = 1, \dots, d $, we obtain a bilinear system for $z(s) = \mathbb{E}^x[\phi(X_s^u)]$:
\begin{align}\label{eq:ODE_Koopman}
\dot{z} = \mathcal{A}z + \left(\sum_{j=1}^d \mathcal{B}_j u_j\right)z,
\end{align}
cf.\ \cite{POR20} for a detailed derivation.

Returning to the imaginary-time Schrödinger equation, since the stochastic differential equation in \eqref{eq:optimal_control_problem} is control-affine, i.e., of the form \eqref{eq:ControlSDE}, we can in principle train a Koopman-based surrogate model \eqref{eq:ODE_Koopman} using data. However, one issue that we immediately face is that the system in \eqref{eq:optimal_control_problem} is unstable for $u\neq 0$, which makes accurate surrogate modeling extremely challenging.
To avoid this issue, we introduce a new control variable $\nu$ (with $\nu(s)\in\R^{d_u}$) and a state-dependent matrix $G(x)\in\R^{d \times d_u}$ satisfying $\mathrm{rank}(G(x)) = d$ for all $x \in \mathbb{R}^d$, and set
\begin{equation*}
    u(s) = G(X_s^u) \ts \nu(s).
\end{equation*}
This way, the dynamic constraint in \eqref{eq:optimal_control_problem} is replaced by
\begin{equation} \label{eq:optimal_control_problem_3}
    \mathrm{d}X_s^u = G(X_s^u) \ts \nu(s) \ts \mathrm{d}s + \mathrm{d}B_s,
\end{equation}
and we may choose $G(x)$ in such a way that the system corresponding to each $U_j$ becomes stable, rendering data-driven approximations feasible. Note that the Bellman equation for the stabilized control problem is still given by~\eqref{eq:hjb_pde}, hence the connection to the Schrödinger equation remains valid, see Appendix~\ref{app:stabilized_control}.

As a final step for the transformation of \eqref{eq:optimal_control_problem} to a deterministic system, we need to ensure that all individual terms in the objective function in \eqref{eq:optimal_control_problem} can be computed by a finite-dimensional model for the Koopman generator. Since, by linearity, we have
\begin{align}
&\mathbb{E}^x \left[ \int_\tau^{T} W(X_s^u) + \frac{1}{2}\|G(X_s^u)\nu(s) \|^2 \, \mathrm{ds} - \log(\psi_0(X_T^u)) \right] \notag\\
= &\int_\tau^{T} \left\{\mathbb{E}^x \left[W(X_s^u)\right] + \frac{1}{2}\mathbb{E}^x\left[(\nu(s))^\top (G(X_s^u))^\top G(X_s^u)\ts \nu(s)\right] \right\} \, \mathrm{ds} - \mathbb{E}^x\left[ \log(\psi_0(X_T^u)) \right] \notag\\
= &\int_\tau^{T} \left\{ \mathbb{E}^x \left[W(X_s^u)\right] + \frac{1}{2}\left(\sum_{i,j}\nu_i(s) \mathbb{E}^x\left[\left((G(X_s^u))^\top G(X_s^u)\right)_{i,j}\right]\nu_j(s)\right) \right\} \, \mathrm{ds} \label{eq:transformation_J}\\ &- \mathbb{E}^x\left[ \log(\psi_0(X_T^u)) \right], \notag
\end{align}
a straightforward solution is to directly include these terms in the dictionary $ \Phi = (\phi_1, \dots, \phi_n) $ used for learning the Koopman model (cf. Section~\ref{subsec:koopman_operator}), i.e.,
\begin{equation*}
    z(s) = \begin{bmatrix} W(X_s^u) & \log(\psi_0(X_s^u)) & \left((G(X_s^u))^\top G(X_s^u)\right)_{i,j} & \ldots\end{bmatrix},
\end{equation*}
where the third term represents all the entries of $G^\top G$.
\begin{remark}
We will see in the numerical results in Section \ref{sec:numerics_control} that $G$ can be chosen such that \eqref{eq:optimal_control_problem_3} becomes an Ornstein--Uhlenbeck process for each fixed control in $ U $. Depending on the specific problem setup, the dictionary entries then mainly consist of monomials up to a certain order.
\end{remark}

\section{Data-driven methods for quantum systems}
\label{sec:Data-driven methods for quantum systems}

A plethora of data-driven methods for the approximation of the Koopman operator and Perron--Frobenius operator from simulation or measurement data have been developed over the last years
\cite{BMM12, WKR15, KKS16}, including many tensor \cite{NSVN15, KGPS18, NGKC21}, kernel \cite{SP15, WRK15, KSM19, Tian20}, or neural network \cite{LDBK17, Otto19, MPWN18} extensions aiming at mitigating the curse of dimensionality. We will show how such methods (or generalizations thereof) can be used to analyze quantum systems. These data-driven methods can either be applied directly to data obtained by solving the Schrödinger equation or the corresponding stochastic formulations. We will highlight only a few potential use cases by first showing how the methods are typically used for classical systems and then applying these methods to quantum systems.

\subsection{Dynamic mode decomposition}
\label{sec:DMD}

One of the simplest but also most popular methods to estimate the Koopman operator from data is \emph{dynamic mode decomposition} (DMD) \cite{Schmid10, TRLBK14}. 

\paragraph{Conventional DMD.}

Let $ \Delta_t $ be a fixed lag-time and $ \Theta^{\Delta_t} $ the flow map associated with an arbitrary dynamical system of the form \eqref{eq:SDE}. Given training data $ \{ (x^{(i)}, y^{(i)}) \}_{i=1}^m $, where $ y^{(i)} = \Theta^{\Delta_t}(x^{(i)}) $, we define the data matrices $ X, Y \in \R^{d \times m} $ by
\begin{equation*}
    X =
    \begin{bmatrix}
        x^{(1)} & x^{(2)} & \dots & x^{(m)}
    \end{bmatrix}
    ~~ \text{and} ~~
    Y =
    \begin{bmatrix}
        y^{(1)} & y^{(2)} & \dots & y^{(m)}
    \end{bmatrix}.
\end{equation*}
DMD is based on the assumption that a linear relationship between the inputs and outputs exists, i.e., $ y^{(i)} = A \ts x^{(i)} $. The matrix $ A \in \R^{d \times d} $ is then estimated by solving the regression problem
\begin{equation*}
    \min_{A \in \R^{d \times d}} \norm{Y - A \ts X}_F.
\end{equation*}
The solution is given by $ A = Y \ts X^+ $, where $ ^+ $ denotes the pseudoinverse. The DMD eigenvalues and modes are defined to be the eigenvalues and eigenvectors of $ A $. If $ d \gg m $, then estimating the full matrix $ A $ is numerically inefficient, even storing $ A $ might be infeasible. There are many different algorithms to compute the DMD eigenvalues and modes without computing the full matrix $ A $, see, for example, \cite{TRLBK14}.

\paragraph{DMD for quantum systems.}

Assuming the Hamiltonian $ \mathcal{H} $ is time-independent, the formal solution of \eqref{eq:TDSE} can be written as
\begin{equation*}
    \psi(x, t + \Delta_t) = e^{-i \ts \Delta_t \mathcal{H}} \ts \psi(x, t).
\end{equation*}
By discretizing the spatial domain of a quantum system and replacing the second-order derivatives by finite-difference approximations, the partial differential equation reduces to a system of ordinary differential equations. Let $ x_r \in \R^{d_r} $ denote the vector of grid points, then we can generate training data $ \big\{ (\psi_0^{(i)}(x_r), \psi_{\Delta_t}^{(i)}(x_r)) \big\}_{i=1}^m $, where $ \psi_0^{(i)}(x_r) $ is an initial condition and $ \psi_{\Delta_t}^{(i)}(x_r) $ the corresponding solution at time $ \Delta_t $. This can be written in matrix form as
\begin{equation*}
    \Psi_0 =
    \begin{bmatrix}
        \psi_0^{(1)}(x_r) & \psi_0^{(2)}(x_r) & \dots & \psi_0^{(m)}(x_r)
    \end{bmatrix}
    ~\text{and}~
    \Psi_{\Delta_t} =
    \begin{bmatrix}
        \psi_{\Delta_t}^{(1)}(x_r) & \psi_{\Delta_t}^{(2)}(x_r) & \dots & \psi_{\Delta_t}^{(m)}(x_r)
    \end{bmatrix}.
\end{equation*}
As in the case of standard DMD, we can determine a matrix that approximates the dynamics of the system, i.e.,
\begin{equation*}
    A = \Psi_{\Delta_t} \ts \Psi_0^+.
\end{equation*}
That is, the matrix $ A $ propagates discretized solutions of the Schrödinger equation in time. The eigenvalues $ \mu_\ell $ and eigenvectors $ v_\ell $ of the matrix $ A $ can then be used to approximate eigenvalues and eigenfunctions of the Schrödinger equation. In order to obtain eigenvalues $ \lambda_\ell $ of the Schrödinger equation, we define
\begin{equation*}
    \lambda_\ell = \frac{i}{\Delta_t} \log(\mu_\ell).
\end{equation*}

\paragraph{Numerical results.}

Let us consider the quantum harmonic oscillator. We choose the lag time $ \Delta_t = 0.1 $ and discretize the interval $ [-5, 5] $ using $ d = 100 $ equidistant grid points, which we denote by $ x_r $. This spatial discretization turns the partial differential equation \eqref{eq:TDSE} into a system of coupled complex-valued ordinary differential equations. We then generate $ m = 200 $ initial conditions $ \psi_0^{(i)}(x_r) = \mathds{1}_{I^{(i)}}(x_r) \in \mathbb{C}^{d_r} $, where $ \mathds{1}_{I^{(i)}} $ denotes an indicator function for a randomly chosen interval $ I^{(i)} \subseteq [-5, 5] $. The corresponding vectors $ \psi_{\Delta_t}^{(i)}(x_r) \in \mathbb{C}^{d_r} $ are computed by solving the resulting system of ordinary differential equations using a standard Runge--Kutta integrator. (Note that we are not using the stochastic dynamics derived in Section \ref{sec:Stochastic descriptions} in this case.) With the aid of the data matrices $ \Psi_0, \Psi_{\Delta_t} \in \mathbb{C}^{d_r \times m} $, we compute the matrix $ A $ and its eigenvalues and eigenvectors. The results are shown in Figures~\ref{fig:QHO_DMD}(a) and~\ref{fig:QHO_DMD}(b). The highlighted eigenvalues are listed in Table~\ref{tab:DMD eigenvalues}. The eigenvalues and eigenvectors are good approximations of the analytically computed results, see Example~\ref{ex:1D problems}.

Instead of solving the Schrödinger equation \eqref{eq:TDSE}, we can also solve the imaginary-time Schrödinger equation \eqref{eq:it TDSE} and apply DMD. The eigenvalues of the Schrödinger equation can then be estimated via
\begin{equation*}
    \lambda_\ell = -\frac{1}{\Delta_t}\log(\mu_\ell).
\end{equation*}
The results are shown in Figures~\ref{fig:QHO_DMD}(b) and \ref{fig:QHO_DMD}(c) and the corresponding eigenvalues are also listed in Table~\ref{tab:DMD eigenvalues}. We again obtain accurate estimates of the eigenvalues and eigenfunctions. The advantage of the imaginary-time formulation is that the eigenvalues are well ordered and not distributed on the complex unit circle and consequently only determined up to $ 2 \ts \pi $.
\begin{figure}
    \centering
    \begin{minipage}[t]{0.32\textwidth}
        \centering
        \subfiguretitle{(a)}
        \includegraphics[height=0.22\textheight]{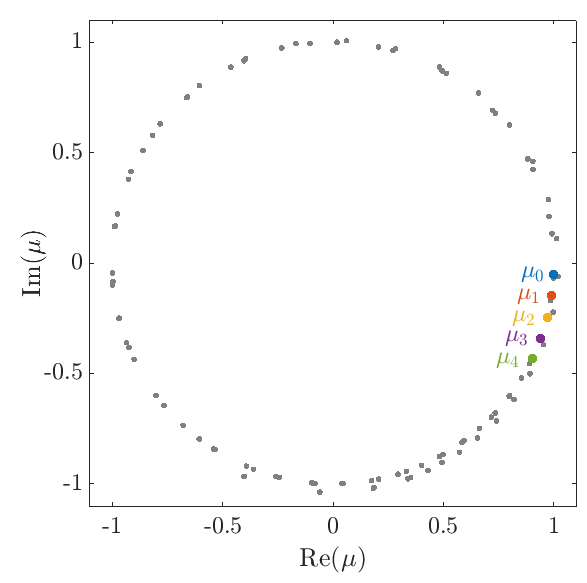}
    \end{minipage}
    \begin{minipage}[t]{0.32\textwidth}
        \centering
        \subfiguretitle{(b)}
        \includegraphics[height=0.22\textheight]{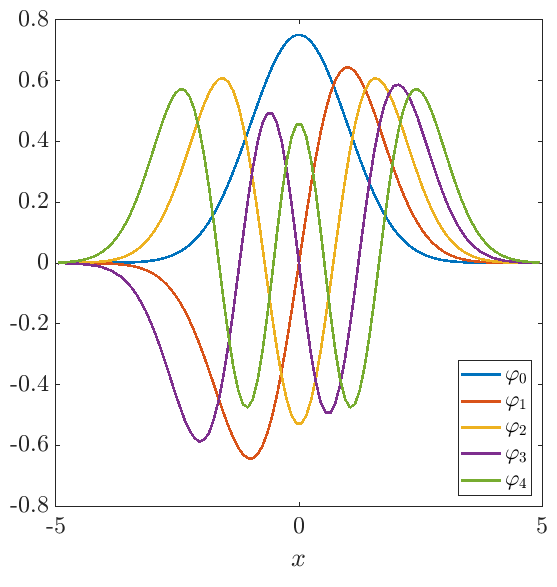}
    \end{minipage}
    \begin{minipage}[t]{0.32\textwidth}
        \centering
        \subfiguretitle{(c)}
        \includegraphics[height=0.22\textheight]{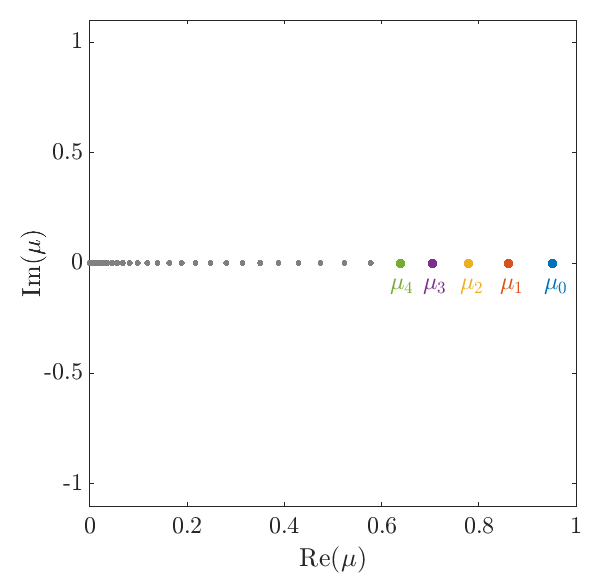}
    \end{minipage}
    \caption{(a) Eigenvalues of the matrix $ A $ associated with the Schrödinger equation. (b)~A few select eigenvectors corresponding to the highlighted eigenvalues with the same color. (c) Eigenvalues using the imaginary-time Schrödinger equation.}
    \label{fig:QHO_DMD}
\end{figure}
\begin{table}
    \centering
    \caption{Eigenvalues computed by applying DMD to the Schrödinger equation \eqref{eq:TDSE} and the imaginary-time Schrödinger equation \eqref{eq:it TDSE}.}
    \label{tab:DMD eigenvalues}
    \begin{tabular}{c|C{2.4cm}C{1.1cm}|C{1.1cm}C{1.1cm}}
          & \multicolumn{2}{c|}{real time} & \multicolumn{2}{c}{imaginary time} \\ \hline
        $ \ell $ & $ \mu_\ell $ & $ \lambda_\ell $ & $ \mu_\ell $ & $ \lambda_\ell $ \\ \hline
        0 & $0.999 - 0.045 \ts i$ & 0.499 & 0.951 & 0.450 \\
        1 & $0.989 - 0.149 \ts i$ & 1.498 & 0.861 & 1.498 \\
        2 & $0.969 - 0.247 \ts i$ & 2.496 & 0.779 & 2.496 \\
        3 & $0.940 - 0.342 \ts i$ & 3.492 & 0.705 & 3.492 \\
        4 & $0.901 - 0.434 \ts i$ & 4.487 & 0.638 & 4.487
    \end{tabular}
\end{table}
The matrix $ A $ could now also be used to predict the evolution of the system. This is another important use case of DMD.

\subsection{Extended dynamic mode decomposition}

\emph{Extended dynamic mode decomposition} (EDMD) \cite{WKR15, KKS16} can be regarded as a nonlinear variant of DMD.

\paragraph{Conventional EDMD.}

The data is first embedded into a typically higher-dimensional feature space using a nonlinear transformation $ \phi \colon \R^d \to \R^n $. The chosen basis functions $ \phi_1, \dots, \phi_n \colon \R^d \to \R $ could, for instance, be indicator functions, monomials, or radial basis functions, the optimal choice depends on the system for which we aim to approximate the Koopman operator. We define the transformed data matrices $ \Phi_x, \Phi_y \in \R^{n \times m} $ by
\begin{equation*}
    \Phi_x = \begin{bmatrix} \phi(x^{(1)}) & \phi(x^{(2)}) & \dots & \phi(x^{(m)}) \end{bmatrix}
    ~\text{and}~
    \Phi_y = \begin{bmatrix} \phi(y^{(1)}) & \phi(y^{(2)}) & \dots & \phi(y^{(m)}) \end{bmatrix}.
\end{equation*}
The minimization problem thus becomes
\begin{equation*}
    \min_{K \in \R^{n \times n}} \big\| \Phi_y - K^\top \Phi_x \|_F.
\end{equation*}
The solution is now given by
\begin{equation*}
    K^\top = \Phi_y \ts \Phi_x^+ = \big(\Phi_y \ts \Phi_x^\top\big) \big(\Phi_x \ts \Phi_x^\top\big)^+ = C_{yx} \ts C_{xx}^+.
\end{equation*}
The matrices $ C_{xx} $ and $ C_{xy} = C_{yx}^\top $ are empirical estimates of the matrices $ C^t $ and $ A^t $ introduced in Section~\ref{subsec:koopman_operator}. The matrix $ K $ is the representation of the Koopman operator projected onto the space spanned by the basis functions $ \phi $. Approximate eigenvalues and eigenfunctions of the Koopman operator can be computed by determining the eigenvalues and eigenvectors of $ K $. If the number of basis functions $ n $ is larger than the number of snaphots $ m $, a dual method called \emph{kernel EDMD} \cite{WRK15, KSM19} can be used. The basis functions are then implicitly defined by the feature map associated with the kernel. Instead of an eigenvalue problem involving covariance and cross-covariance matrices of size $ n \times n $, a problem involving (time-lagged) Gram matrices of size $ m \times m $ needs to be solved. EDMD can be used in the same way to compute eigenfunctions of the Perron--Frobenius operator. The matrix representation in that case is $ P^\top = C_{xy} \ts C_{xx}^+ $. See \cite{KKS16} for a detailed derivation.

\paragraph{EDMD applied to quantum systems.}

We could also apply EDMD or kernel EDMD to the data generated in Section~\ref{sec:DMD}, but we will now present a different use case and show how EDMD and its extensions can be applied to the stochastic formulations derived in Section~\ref{sec:Stochastic descriptions}. We assume that the process is stationary, non-stationary problems will be discussed in the following subsection.

\paragraph{Numerical results.}

\begin{figure}
    \centering
    \begin{minipage}[t]{0.32\textwidth}
        \centering
        \subfiguretitle{(a)}
        \includegraphics[height=0.215\textheight]{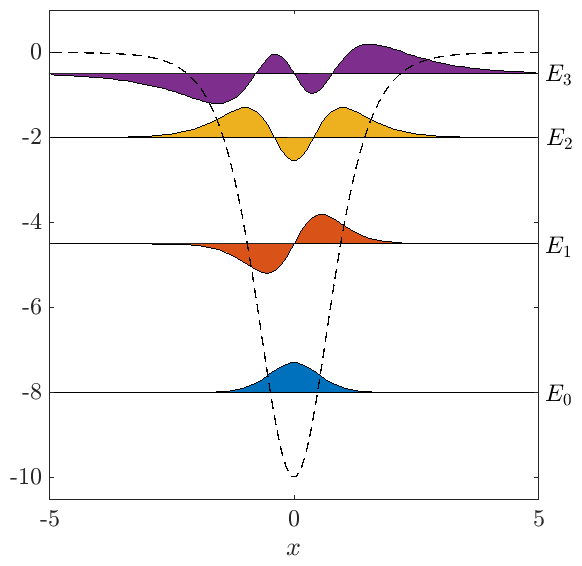}
    \end{minipage}
    \begin{minipage}[t]{0.32\textwidth}
        \centering
        \subfiguretitle{(b)}
        \includegraphics[height=0.215\textheight]{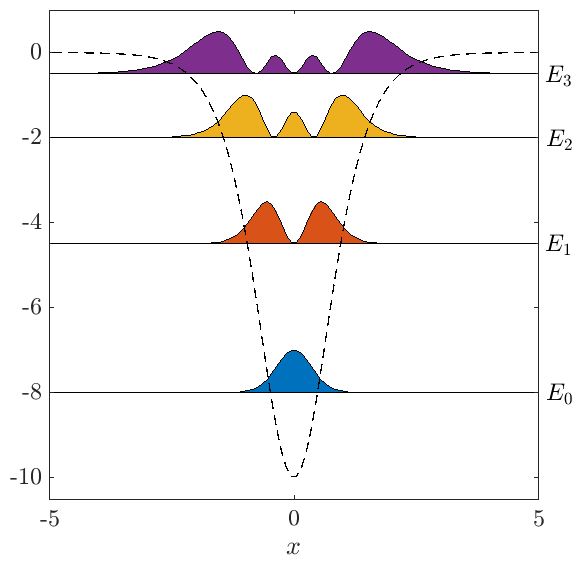}
    \end{minipage}
    \begin{minipage}[t]{0.32\textwidth}
        \centering
        \subfiguretitle{(c)}
        \includegraphics[height=0.215\textheight]{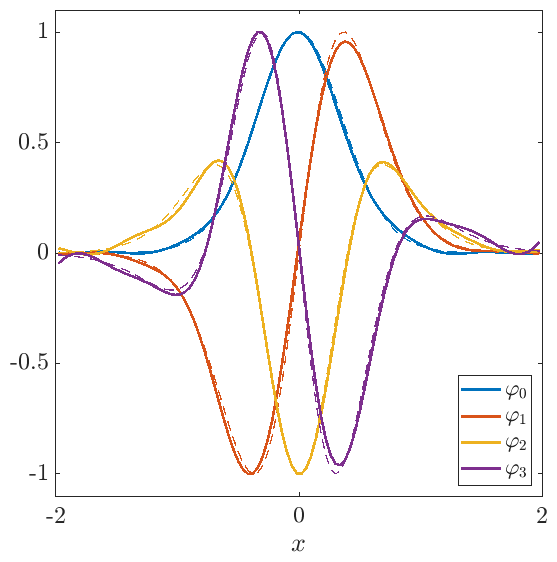}
    \end{minipage}
    \caption{(a) Eigenfunctions of the Schrödinger operator computed with the aid of gEDMD. (b)~Corresponding probability densities. The dashed line represents the Pöschl--Teller potential. (c)~Eigenfunctions of the Perron--Frobenius operator obtained by applying EDMD to time-series data. The dashed lines are the analytically computed eigenfunctions. Note that we choose different axis limits for the sake of visibility.}
    \label{fig:PoeschTeller}
\end{figure}

Let us consider the Pöschl--Teller potential introduced in Example~\ref{ex:1D problems} and set $ s = 4 $. The energy levels and (unnormalized) wave functions are then given by
\begin{align*}
    E_0 &= -8,           & \psi_0(x) &= \sech^4(x), \\
    E_1 &= -\frac{9}{2}, & \psi_1(x) &= \sech^3(x) \tanh(x), \\
    E_2 &= -2,           & \psi_2(x) &= \sech^2(x) \big(7 \tanh^2(x) - 1\big), \\
    E_3 &= -\frac{1}{2}, & \psi_3(x) &= \sech(x) \tanh(x) \big(7 \tanh^2(x) - 3\big),
\end{align*}
see also \cite{BH18:PoeschlTeller}. The eigenfunctions are shown in Figure~\ref{fig:PoeschTeller}. For the Schrödinger operator, we used kernel gEDMD---a variant of EDMD that directly approximates the infinitesimal generator of the Koopman operator \cite{KNPNCS20}, but can also be used to approximate the Schrödinger operator \cite{KNH20} and is related to quantum Monte Carlo methods---with 100 randomly generated test points in the interval $ [-5, 5] $ and a Gaussian kernel with bandwidth $ \varsigma = 0.3 $. The resulting eigenvalues and eigenfunctions are in perfect agreement with the analytically computed ones. In order to compute eigenfunctions of the Perron--Frobenius operator, we generate 10,000 trajectories (using randomly drawn initial conditions) by integrating the stochastic differential equation with the aid of the Euler--Maruyama method, where the step size is $ h = 10^{-3} $ and the lag time $ \Delta_t = 0.1 $. Applying EDMD with a dictionary comprising 100 Gaussian functions with bandwidth $ \varsigma = 0.5 $ results in the generator eigenvalues $ \lambda_0 = 0 $, $ \lambda_1 = -3.49 $, $ \lambda_2 = -5.90 $, and $ \lambda_3 = -7.61 $. Shifting these eigenvalues according to Lemma~\ref{lem:Ground-state transformation}, we obtain $ E_0 = -8 $, $ E_1 = -4.51 $, $ E_2 = -2.1 $, and $ E_3 = -0.39 $, which is close to the true solutions. Dividing the eigenfunctions of the Perron--Frobenius operator by the ground state, we obtain estimates of the higher-energy states of the quantum system.

Analogously, we could use eigenfunctions of the Koopman operator to compute higher-energy states. The only difference is that we then have to multiply the eigenfunctions of the Koopman operator by the ground state. This illustrates the close relationship between the Schrödinger operator, the Koopman operator, and the Perron--Frobenius operator. Numerical results for the quantum harmonic oscillator and the hydrogen atom can be found in \cite{KNH20}.

\subsection{Canonical correlation analysis}
\label{subsec:cca_example}

\emph{Canonical correlation analysis} (CCA) was originally developed to maximize the correlation between two multi-dimensional random variables \cite{Hotelling36}. It was shown in \cite{KHMN19, WuNo20} that CCA, when applied to Lagrangian data, can be interpreted as a composition of the Koopman operator and a reweighted Perron--Frobenius operator.

\paragraph{Conventional CCA.}

So far, we assumed that the dynamical system is time-homogeneous, i.e., the Koopman operator depends only on the lag time $ \Delta_t $. If the dynamics change over time, instead of computing eigenfunctions of the Koopman operator, typically eigenfunctions of a related forward-backward operator are computed. This leads to the notion of \emph{coherent sets} \cite{FrSaMo10, Froyland13}, which can be regarded as generalizations of metastable sets. Coherent sets are regions of the phase space that disperse slowly, i.e., particles are (almost) trapped in these sets. We again collect data as before. The matrix of the operator representing the forward-backward dynamics is then given by
\begin{equation*}
    L^\top = C_{xy} \big(C_{yy} + \varepsilon I\big)^{-1} C_{yx} \big(C_{xx} + \varepsilon I\big)^{-1},
\end{equation*}
where $ \varepsilon $ is a regularization parameter that ensures that the inverse exists. Alternatively, we could use the pseudoinverse. In what follows, the eigenvalues of this matrix are denoted by $ \kappa_\ell $. A detailed derivation can be found in \cite{KHMN19}. Similar kernel-based variants have been proposed in \cite{Tian20} and a deep-learning counterpart in \cite{MPWN18}.

\paragraph{CCA applied to quantum systems.}

We have seen that for the non-stationary case the ground-state transformation and Nelson's formulation result in different models. We now apply CCA to data generated by Nelson's stochastic mechanics.

Assume the harmonic oscillator is in the coherent state
\begin{equation*}
    \psi_c(x, t) = \left(\tfrac{\omega}{\pi}\right)^{1/4} e^{-\tfrac{\omega}{2} \big(x-x_0 \ts \cos(\omega \ts t)\big)^2 - \tfrac{1}{2} \ts i \ts \omega \ts t - i \ts \omega \big(x \ts x_0 \sin(\omega \ts t) - \tfrac{1}{4} x_0^2 \sin(2 \ts \omega \ts t)\big)},
\end{equation*}
see \cite{Grabert79}. The corresponding probability density is a wave whose center is periodically moving from $ x_0 $ to $ -x_0 $. If follows that
\begin{equation*}
    \nabla R = -\omega \big(x - x_0 \cos(\omega \ts t)\big)
    \quad \text{and} \quad
    \nabla S = -\omega \big(x_0 \sin(\omega \ts t)\big)
\end{equation*}
so that
\begin{equation*}
    b(x, t) = -\omega\big(x - x_0 \cos(\omega \ts t) + x_0 \sin(\omega \ts t)\big) = -\omega \left(x + \sqrt{2} \ts x_0 \sin(\omega \ts t -\tfrac{\pi}{4})\right),
\end{equation*}
which can be viewed as an Ornstein--Uhlenbeck process with time-dependent but periodic shift. We will now use a superposition of this coherent state and the second eigenfunction, i.e., $ \psi = \psi_2 + \frac{1}{2} \ts \psi_c $ (properly normalized) with $ x_0 = 2 $, to construct a system with nontrivial coherent sets. The required osmotic and current velocities for such a superposition are derived in Appendix~\ref{app:Superposition}.

\paragraph{Numerical results.}

\begin{figure}
    \centering
    \begin{minipage}{0.49\textwidth}
        \centering
        \subfiguretitle{\hspace{2.1em}(a)}
        \includegraphics[width=0.95\textwidth]{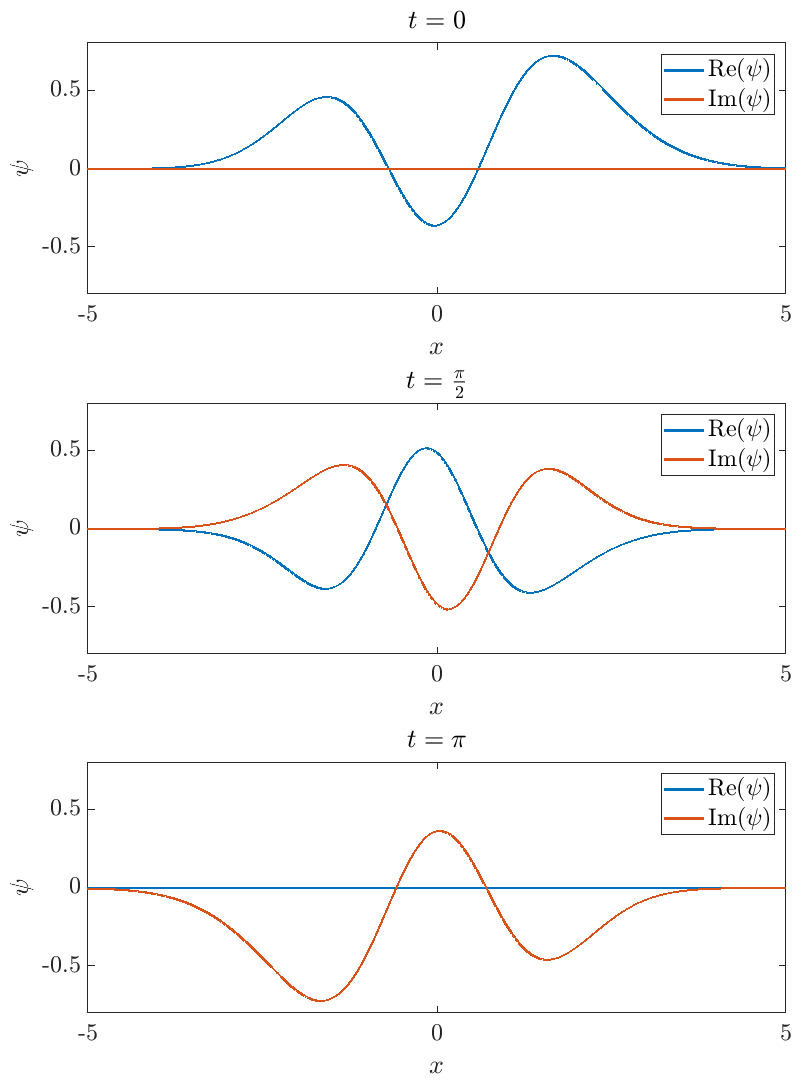}
    \end{minipage}
    \begin{minipage}{0.49\textwidth}
        \centering
        \subfiguretitle{\hspace{2.1em}(b)}
        \includegraphics[width=0.95\textwidth]{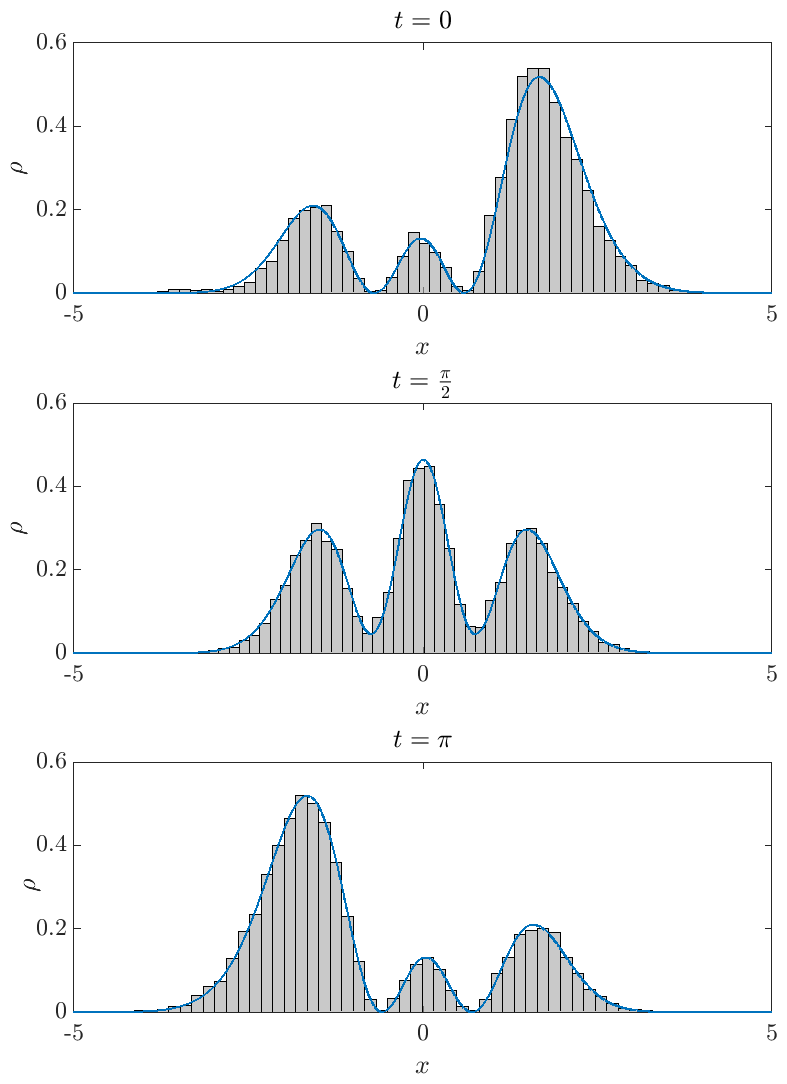}
    \end{minipage} \\[1ex]
    \begin{minipage}{0.49\textwidth}
        \centering
        \subfiguretitle{\hspace{2.1em}(c)}
        \includegraphics[width=0.95\textwidth]{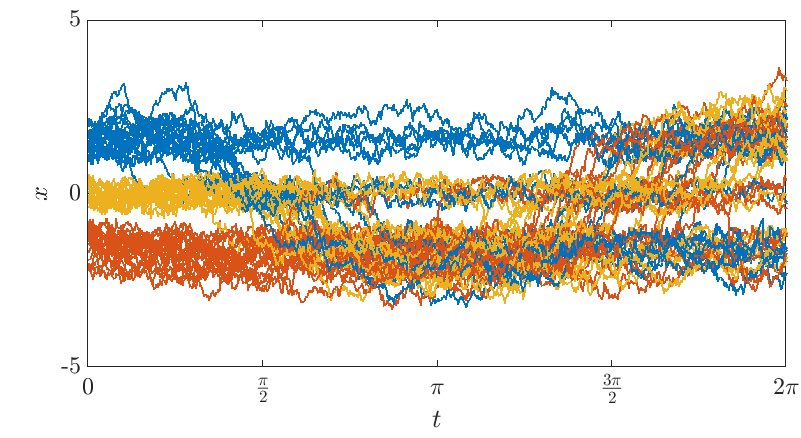}
    \end{minipage}
    \begin{minipage}{0.49\textwidth}
        \centering
        \subfiguretitle{\hspace{2.1em}(d)}
        \includegraphics[width=0.95\textwidth]{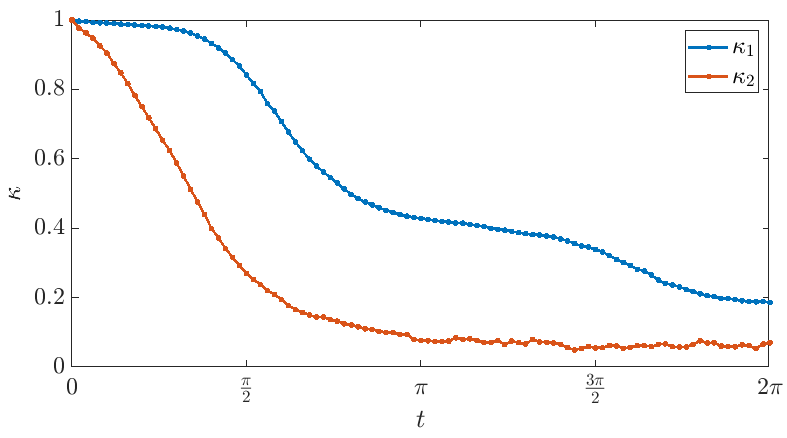}
    \end{minipage}
    \caption{(a) Wave function $ \psi $ at different time points. (b) Corresponding analytically computed probability densities $ \rho $ and histograms of particles. (c) Subset of trajectories colored with respect to the obtained clustering into coherent sets. (d) Coherence as a function of the time $ t $.}
    \label{fig:QHO}
\end{figure}

We generate 10,000 particles sampled from the initial distribution and integrate the stochastic differential equation from $ t = 0 $ to $ t = 2 \ts \pi $. At $ t = 0 $ the wave function has two nodes and there are thus three invariant sets. With increasing time, these sets become connected and particles can move to the neighboring sets. At $ t = \pi $ the sets are disconnected again. An illustration of the time-dependent wave function is shown in Figure~\ref{fig:QHO}(a), the corresponding analytically computed and empirical probability densities in Figure~\ref{fig:QHO}(b). The distribution of the particles perfectly follows $ \rho $. For the initial sampling, we used Metropolis--Hastings. By applying kernel CCA to the trajectory data and clustering the dominant eigenfunctions, we identify three finite-time coherent sets associated with the time-dependent potential wells. Particles starting in one coherent set will remain is this set with a high probability (compared to the probability that they transition to another set). This can be seen in Figure~\ref{fig:QHO}(c), where we colored some trajectories according to the computed coherent sets. The coherence, however, decreases over time so that after a while the particles seem to be well-mixed as shown in Figure~\ref{fig:QHO}(d). The eigenfunction associated with $ \kappa_1 $ is negative for the left two sets and positive for the right one (or vice versa), while $ \kappa_2 $ distinguishes between the coherent set in the middle and the other two. It can be seen that $ \kappa_2 $ quickly decreases and is not distinguishable from noise for large $ t $. This is consistent with the trajectory data: The yellow set is dispersed quickly, whereas the blue set remains coherent for a longer time.

The example demonstrates that it is possible to analyze the probability flow associated with time-dependent wave functions using Koopman operator theory and data-driven methods to estimate eigenfunctions of associated forward-backward operators.

\section{Data-driven analysis of quantum systems via control}
\label{sec:numerics_control}

In Section \ref{subsec:solution_control_problem}, we have seen how solutions of the imaginary-time Schrödinger equation are related to solutions of the control problem \eqref{eq:optimal_control_problem}.
We call this approach \emph{DISCo} (Data-driven solution of the Imaginary-time Schrödinger equation via Control), and we now study DISCo in detail for two systems, namely the quantum harmonic oscillator and the hydrogen atom.

As discussed above, the stochastic differential equation in \eqref{eq:optimal_control_problem} is unstable for $u\neq0$. Thus, we aim to choose the matrix $G$ in \eqref{eq:optimal_control_problem_3} in such a way that the system becomes stable, while at the same time being easy to approximate from data.
A straightforward choice in this case is the Ornstein--Uhlenbeck process, for which we set
\begin{equation*}
G(x) = \begin{bmatrix} \text{diag}(-x) & I\end{bmatrix} \in \mathbb{R}^{d \times 2d} \quad \mbox{and} \quad \hat\nu(s) = \begin{bmatrix}
1 & \cdots & 1 & (\nu(s))^\top
\end{bmatrix}^\top \in \mathbb{R}^{2d}.
\end{equation*}
This process is stable for all $\nu$ such that the Koopman generators associated with
\begin{equation*}
    \mathrm{d}X_s^u = -(X_s^u-U_j) \, \mathrm{d}s + \mathrm{d}B_s, \quad j=1,\ldots,d+1, \quad U =\begin{bmatrix} 0 & e_1 & \cdots & e_d\end{bmatrix},
\end{equation*}
can be reliably approximated from data. In this case, the objective function in \eqref{eq:optimal_control_problem} can be transformed accordingly (cf.\ the general formulation \eqref{eq:transformation_J}):
\begin{equation} \label{eq:Objective_Koopman}
\begin{aligned}
J(x,\tau)&=\mathbb{E}^x \left[ \int_\tau^{T} \left\{ W(X_s^u) + \frac{1}{2}\|u(s) \|^2 \right\} \, \mathrm{ds} - \log(\psi_0(X_T^u)) \right] \\
&= \int_\tau^{T} \left\{ \mathbb{E}^x \left[W(X_s^u)\right] + \frac{1}{2} \ts \mathbb{E}^x\left[(X_s^u)^\top X_s^u-2(X_s^u)^\top\nu(s)+\nu^\top(s)\ts\nu(s)\right] \right\} \, \mathrm{ds} \\ &\phantom{=}- \mathbb{E}^x\left[ \log(\psi_0(X_T^u)) \right] \\
&= \int_\tau^{T} \mathbb{E}^x \left[W(X_s^u)\right] + \frac{1}{2}\ts\mathbb{E}^x\left[(X_s^u)^\top X_s^u\right] - \ts\mathbb{E}^x \left[(X_s^u)\right]^\top\nu(s)\\ &\phantom{=} + \frac{1}{2} \nu^\top(s)\ts\nu(s) \, \mathrm{ds} - \mathbb{E}^x\left[ \log(\psi_0(X_T^u)) \right].
\end{aligned}
\end{equation}
Consequently, the dictionary $\Phi$ needs to contain at least the terms (for $1 \leq i,\,j \leq d$): 
\begin{equation}
\label{eq:dict_ou_process}
\Phi(x)=\begin{bmatrix} W(x) & \log(\psi_0(x)) & x_i & x_i\, x_j & \ldots\end{bmatrix}.
\end{equation}

Choosing the remaining terms in the dictionary is a critical step. The literature on Koopman operator methods contains a broad selection of different options, e.g., monomial bases, Gaussians~\cite{Noe2013,NKPMN14}, indicator functions~\cite{Prinz2011,SS13}, reproducing kernels~\cite{SP15, WRK15, KSM19, Tian20}, deep learning architectures~\cite{LDBK17, Otto19, MPWN18}, and many more. However, as we expect quantum systems to behave quite differently compared to molecular or fluid dynamics simulations, we leave the selection of the dictionary as a topic for future research.

\paragraph{Example 1: The quantum harmonic oscillator.}
We consider the one-dimensional quantum harmonic oscillator, and aim to solve the imaginary-time equation~\eqref{eq:it TDSE} with initial condition equal to the ground state. We then have 
\begin{equation*}
W(x)=\frac{x^2}{2} \qquad \mbox{and} \qquad \psi_0(x) = e^{-\frac{x^2}{2}} \quad \Rightarrow \quad \log(\psi_0(x)) = -\frac{x^2}{2}.
\end{equation*}
The analytical solution is $\psi(x, \tau) = e^{-\frac{x^2 + \tau}{2}}$. Thus, if we consider monomials up to order at least two in the dictionary $\Phi$, i.e., $\Phi(x) = \begin{bmatrix}
1 & x & x^2 & \ldots
\end{bmatrix}$, then \eqref{eq:Objective_Koopman} becomes
\begin{equation} \label{eq:ObjectiveQHO}
J(x,\tau) = \frac{1}{2} \int_\tau^{T} z_3(s) + z_3(s) - 2 \ts z_2(s) \ts \nu(s) + \nu^2(s) \ts \mathrm{ds} - \frac{1}{2}z_3(T).
\end{equation}
In order to obtain an approximate solution for the imaginary-time Schrödinger equation, we solve the following optimal control problem on a grid of initial conditions $x$ and $\tau$:
\begin{equation}\label{eq:Opt_QHO}
\min_{\nu \colon [\tau, T] \to \mathbb{R}^d} \eqref{eq:ObjectiveQHO} \qquad s.t.\quad \eqref{eq:ODE_Koopman}~\mbox{with initial condition}~z(\tau)=\Phi(x).
\end{equation}
For the numerical approximation of the Koopman generators, we consider monomials up to order three as observables, i.e., $\Phi(x) = \begin{bmatrix}1 & x & x^2 & x^3\end{bmatrix}$. We then collect 30,000 data points for each of the $d+1 = 2$ systems using i.i.d.\ sampling from the interval $X_0^{U_j} \in [-3,3]$, where $U_j \in \{0, 1\}$. Finally, we calculate finite-dimensional matrix approximations of the operators $\mathcal{A}$ and $\mathcal{B}$ in \eqref{eq:ODE_Koopman} via the gEDMD algorithm from \cite{KNPNCS20}. We thus obtain a four-dimensional system describing the dynamics of $z$ within the subspace spanned by the monomial functions. The prediction capability of the Koopman model for the terms $\mathbb{E}^x[X_s^u]$ and $\mathbb{E}^x[(X_s^u)^2]$ is shown in Figure \ref{fig:Control_QHO_prediction}, and we observe very good performance over a large time horizon and for complex control inputs.

\begin{figure}
	\centering
	\includegraphics[width=\textwidth]{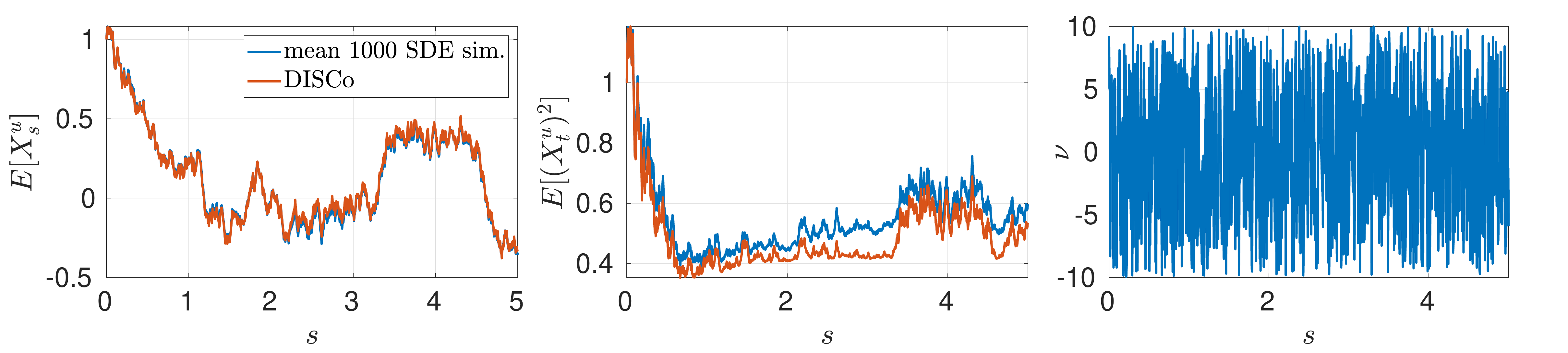}
	\caption{Prediction of $\mathbb{E}^x[X_s^u]$ and $\mathbb{E}^x\left[(X_s^u)^2\right]$, where $u(s)$ is sampled i.i.d.\ from the interval $[-10,10]$. The true expected values are calculated by averaging over 1000 repeated simulations of the SDE \eqref{eq:ControlSDE}.}
	\label{fig:Control_QHO_prediction}
\end{figure}

The resulting bilinear model can now be used to solve the deterministic optimal control problem \eqref{eq:Opt_QHO} instead of the stochastic problem \eqref{eq:optimal_control_problem}. The solution for the imaginary-time Schrödinger equation is then obtained using $\psi(x,\tau) = e^{-J(x,T-\tau)}$ (see Theorem~\ref{thm:bellman_eq}). Figure~\ref{fig:Control_QHO_optim} shows that---after scaling the solutions such that $\int \psi(x,0)\, \mathrm{dx}=1$---we observe a very high accuracy with an absolute error $\epsilon_{abs} = |\psi_\text{DISCo} - \psi_\text{analytical}| \approx 10^{-3}$ and relative error as shown in the right panel. More precisely, for $x \in [-2, 2]$, which covers approximately $95\%$ of the density, we have an average error of less than $0.4\%$, and a maximum error of $1.1\%$.

\begin{figure}
    \centering
    \subfiguretitle{(a) \hspace*{4.5cm} (b) \hspace*{4.5cm} (c)}
    \includegraphics[width=\textwidth]{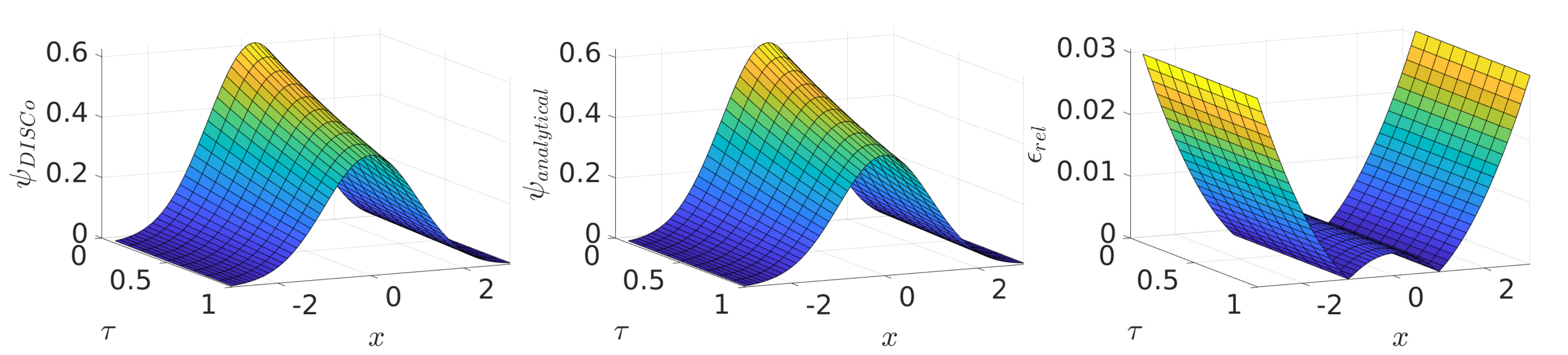}
    \caption{(a) Solution of the imaginary-time Schrödinger equation obtained by DISCo, i.e., via the solution of problem \eqref{eq:Opt_QHO}. (b) The analytical solution $\psi(x,\tau)=e^{-\frac{1}{2}(\tau + x^2)}$. (c)~Relative error $\epsilon_{rel} = \frac{|\psi_\text{DISCo} - \psi_\text{analytical}|}{|\psi_\text{analytical}|}$.}
    \label{fig:Control_QHO_optim}
\end{figure}

\paragraph{Example 2: The hydrogen atom.}

As a more challenging example, we consider the hydrogen atom, with initial condition given again by the analytical ground state. The state space is $\mathbb{X} = \mathbb{R}^3$, while
\begin{equation*}
    W(x)=-\frac{1}{r}=-\frac{1}{\norm{x}} \qquad \mbox{and} \qquad \psi_0(x) = e^{-\norm{x}} \quad \Rightarrow \quad \log(\psi_0(x)) = -\norm{x}.
\end{equation*}
The analytical solution is $\psi(x,\tau) = e^{-\frac{1}{2}\tau}e^{-\norm{x}}$. With this, the objective function in \eqref{eq:optimal_control_problem} and---using the Ornstein--Uhlenbeck process as described above---\eqref{eq:Objective_Koopman} becomes
\begin{equation}\label{eq:objective_hydrogen}
\mathbb{E}^x \left[ \int_\tau^{T} \left\{-\frac{1}{\norm{X_s^u}} +
\frac{1}{2}\ts\mathbb{E}^x\left[(X_s^u)^\top X_s^u\right] - \mathbb{E}^x \left[(X_s^u)\right]^\top\nu(s)+ \frac{1}{2}\nu^\top(s)\ts\nu(s) \right\} \, \mathrm{ds} +\norm{X_T^u} \right].
\end{equation}
Thus, the dictionary has to contain, in addition to monomial terms up to degree at least two, entries for $\frac{1}{\norm{X_s^u}}$ as well as $\norm{X_s^u}$, i.e.,
\begin{align*}
    \Phi(x) &= \begin{bmatrix}\frac{1}{\norm{x}} & \norm{x} & 1 & x_i & x_i\,x_j & \ldots \end{bmatrix}^\top, \quad 1 \leq i,j \leq 3.
\end{align*}
In order to make an informed dictionary selection, we perform a cross validation over a range of possible dictionaries $\Phi$. These are constructed in the following way. Denote by $\hat\Phi_p$ the dictionary consisting of all monomials up to degree $p$. Then set
\begin{equation*}
	\Phi(x) = \begin{bmatrix}
		\hat\Phi_p(x) \\ \hat\Phi_{p_{inv}}(x)\frac{1}{\norm{x}}\\
		\hat\Phi_{p_{norm}}(x)\norm{x}
	\end{bmatrix} = \begin{bmatrix}
	\begin{bmatrix}	1 & x_i & x_i\,x_j & \ldots\end{bmatrix}^\top \\
	\begin{bmatrix} \frac{1}{\norm{x}} & \frac{x_i}{\norm{x}} & \frac{x_i\,x_j}{\norm{x}} & \ldots\end{bmatrix}^\top \\
	\begin{bmatrix} \norm{x} & x_i\norm{x} & x_i\,x_j\norm{x} & \ldots\end{bmatrix}^\top
	\end{bmatrix},
\end{equation*}
with $p$, $p_{inv}$, and $p_{norm}$ being variable parameters that define the respective degrees of the monomial dictionaries. Moreover, we consider the special cases $p_{inv}=\emptyset$ or $p_{norm} = \emptyset$, in which $\frac{1}{\norm{x}}$ (or $\norm{x}$, respectively) are not explicitly calculated, but approximated using entries from $\hat\Phi_p$ as follows:
\begin{equation*}
	\mathbb{E}\left[\frac{1}{\norm{x}}\right] \approx \frac{1}{\norm{\mathbb{E}[x]}}\qquad \text{and} \qquad \mathbb{E}\left[\norm{x}\right] \approx \norm{\mathbb{E}[x]}.
\end{equation*}
Even though these approximations can be very coarse, the results are convincing in practice since the prediction accuracy appears to be generally higher for classical monomial dictionaries.

Denoting by $I_{1}$, $I_{2}$, $I_{inv}$, and $I_{norm}$ the index sets for the entries corresponding to the identity, squared, inverted norm and norm terms, i.e.,
\begin{align*}
\Phi_{I_1}(x) = \begin{bmatrix}x_{1} \\ x_{2} \\ x_{3}\end{bmatrix}, \quad
\Phi_{I_2}(x) = \begin{bmatrix}x^2_{1} \\ x^2_{2} \\ x^3_{3}\end{bmatrix}, \quad
\Phi_{I_{inv}}(x) = \frac{1}{\|x\|}, \quad
\Phi_{I_{norm}}(x) = \|x\|,
\end{align*}
the objective function \eqref{eq:objective_hydrogen} becomes
\begin{equation} \label{eq:objective_hydrogen2}
    \int_\tau^{T} \left\{ - z_{I_{inv}}(s) +\frac{1}{2} \sum_{j\in I_2} (z_{I_{2}})_j (s) - (z_{I_{1}}(s))^\top\nu(s) + \frac{1}{2}(\nu(s))^\top \nu(s) \right\} \, \mathrm{ds} + z_{I_{norm}}(T),
\end{equation}
and the resulting control problem we need to solve then becomes
\begin{equation} \label{eq:Opt_Hydrogen}
    \min_{\nu \colon [\tau, T] \to \mathbb{R}^d} \eqref{eq:objective_hydrogen2}
\qquad s.t.\quad \eqref{eq:ODE_Koopman}~\text{with initial condition}~z(\tau)=\Phi(x).
\end{equation}

\begin{figure}
    \centering
    \subfiguretitle{(a) \hspace*{6.5cm} (b)}
    \includegraphics[width=0.95\textwidth]{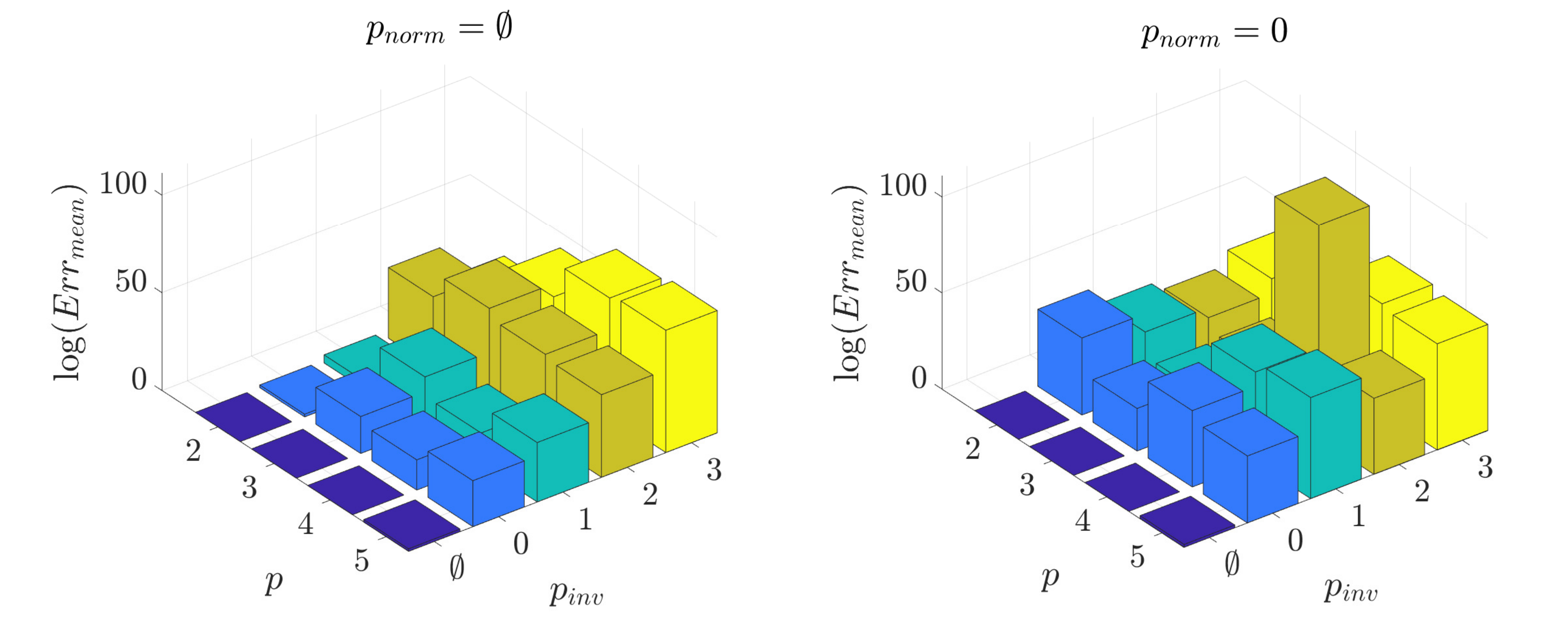}
    \caption{Average prediction performance (averaged over 1000 controlled trajectories and 10 model realizations) for different dictionary sizes $p$, $p_{inv}$, and $p_{norm}$. The best configuration turns out to be $p=2$ and $p_{inv}=p_{norm}=\emptyset$.}
    \label{fig:Hydrogen_ModelSelection}
\end{figure}

Before we approach this, we first perform a study over various degrees for $p$, $p_{inv}$, and $p_{norm}$. To this end, we train ten models for each parameter set and then validate the prediction accuracy of the resulting models against 1000 trajectories with random sinusoidal inputs with uniformly distributed parameters:
\begin{equation*}
    u(s) = \begin{bmatrix}
    a_1 \sin(b_1 \ts s + c_1) \\
    a_2 \sin(b_2 \ts s + c_2) \\
    a_3 \sin(b_3 \ts s + c_3)
    \end{bmatrix}, \qquad \mbox{where}~a_i \propto U(0,5),\quad b_j \propto U(2\pi,6\pi),\quad c_k \propto U(0,2\pi).
\end{equation*}
The error is then calculated as the average $L_2$ error between the prediction of the Koopman model for the terms relevant in the objective function and the expected value of the true system, approximated by averaging over 1000 simulations using the Euler--Maruyama scheme. Figure~\ref{fig:Hydrogen_ModelSelection} shows this analysis, where on average the best configuration turns out to be $p=2$ and $p_{inv}=p_{norm}=\emptyset$, i.e., despite being mathematically inexact, it is beneficial to rely exclusively on polynomial basis functions.
Nevertheless, the best models including either the term $\frac{1}{\norm{x}}$ or $\norm{x}$ are quite similar, which is shown for an exemplary test trajectory in Figure \ref{fig:Hydrogen_ModelComparison}. Still, the model with $p_{inv}=p_{norm}=\emptyset$ (third column) appears to be slightly superior.

\begin{figure}
	\centering
	\includegraphics[width=\textwidth]{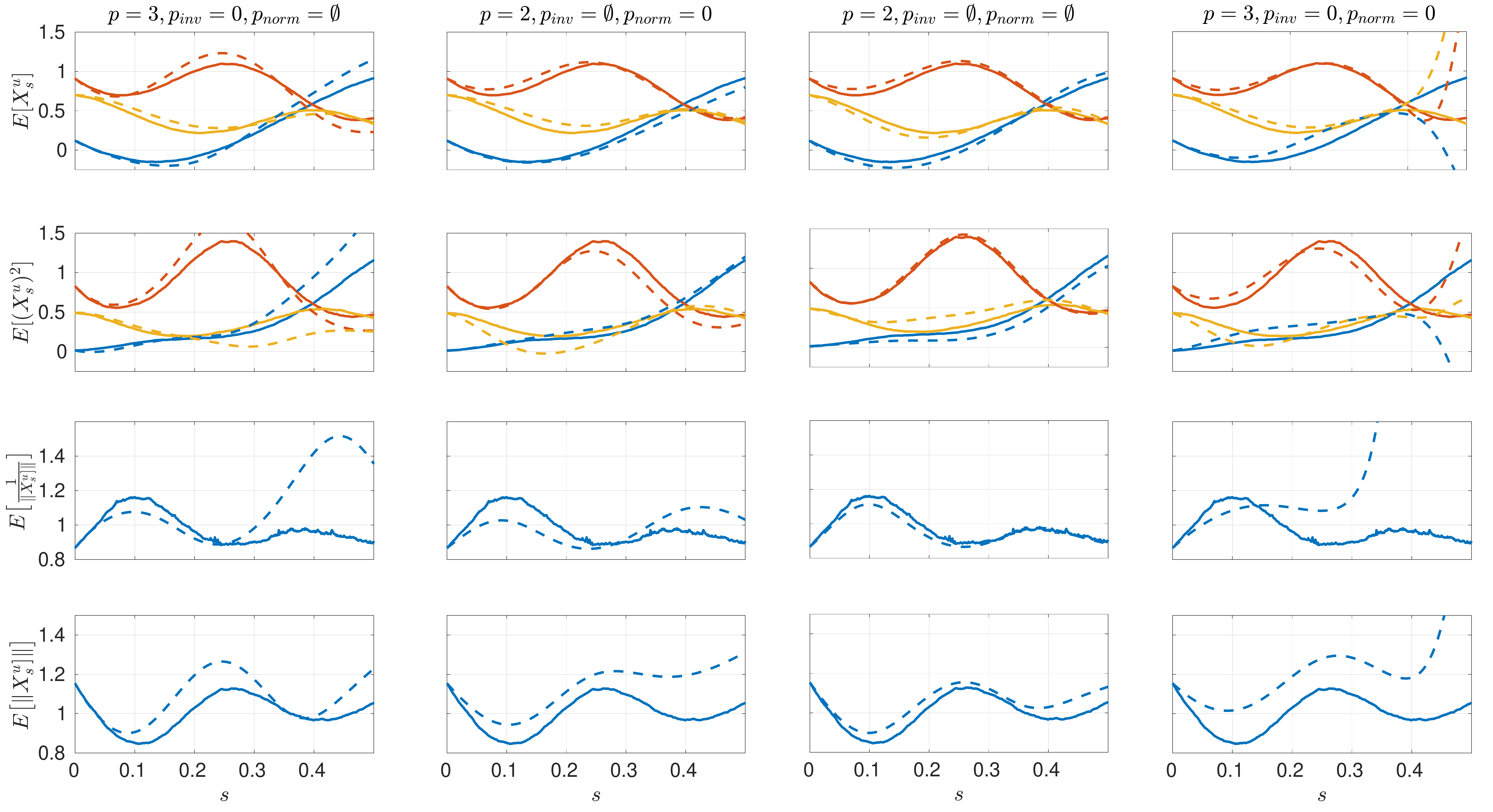}
	\caption{Comparison between expectations calculated by taking averages over 1000 Ornstein--Uhlenbeck simulations (solid lines) and the predictions obtained from the best data-driven models (dashed lines) out of ten repetitions for different sizes of $p$, $p_{inv}$, and $p_{norm}$. If the terms $\frac{1}{\norm{x}}$ or $\norm{x}$ are not explicitly contained in the dictionary, then we approximate them as described above.}
	\label{fig:Hydrogen_ModelComparison}
\end{figure}

Based on these conclusions, we proceed with the model with $p=2$ and $p_{inv}=p_{norm}=\emptyset$ for simulating the imaginary-time Schrödinger equation by solving \eqref{eq:Opt_Hydrogen}. Figure \ref{fig:Control_Hydrogen} shows a comparison between the normalized DISCo solution \eqref{eq:Opt_Hydrogen} with the analytical solution for $\tau = 0.5$ and 1000 randomly drawn points from the sphere with radius 2. Despite minor inaccuracies, we observe a very good agreement, with an average error of approximately~$4.5\%$.

\begin{figure}
    \centering
    \subfiguretitle{\hspace*{1.cm} (a) \hspace*{5.5cm} (b)}
    \includegraphics[width=.8\textwidth]{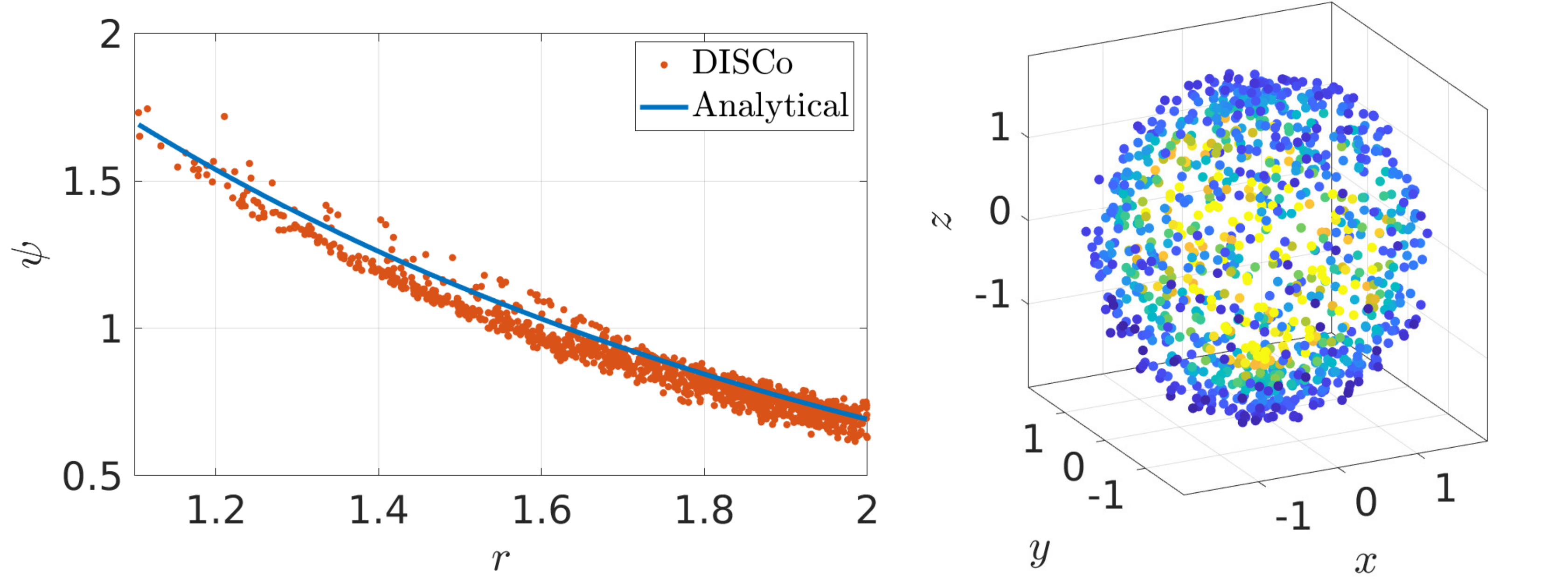}
    \caption{(a) Comparison between analytic and DISCo solutions of the imaginary-time Schrödinger equation at $\tau=0.5$ for 1000 i.i.d.\ initial conditions from the interior of a sphere with radius 2. (b) Visualization of the DISCo solution from (a), where coloring denotes the value of the wave function $\psi$.}
    \label{fig:Control_Hydrogen}
\end{figure}

\section{Conclusion}
\label{sec:Conclusion}

We have shown how Koopman operator theory can be applied to quantum mechanics problems, either by transforming the Schrödinger equation into a Kolmogorov backward equation or by using Nelson's stochastic mechanics, and how data-driven methods for the approximation of the Koopman operator can be used to analyze quantum systems. The analysis of non-stationary problems requires the approximation of forward-backward operators, which was illustrated with the aid of a superposition of wave functions. Moreover, we pointed out relationships between the imaginary-time Schrödinger equation and stochastic optimal control problems. This allowed us to exploit Koopman-based control techniques in order to solve the Schrödinger equation by means of an optimal control problem constrained by an ordinary differential equation. We presented numerical results for various benchmark problems such as the quantum harmonic oscillator, the Pöschl--Teller potential, and the hydrogen atom.

We have seen in Section~\ref{sec:Data-driven methods for quantum systems} that it is in principle possible to directly apply DMD and related techniques to quantum systems. However, this approach suffers from the curse of dimensionality as direct integration of the time-dependent Schrödinger equation is still required. With this in mind, methods that rely only on time-series data generated by associated stochastic differential equations seem particularly promising. Future work will have to focus on determining tailor-made dictionaries for quantum systems, and on the use of kernel methods or deep learning architectures in this context. Bohmian mechanics may also open up an interesting avenue for the analysis of quantum systems using particle trajectories.

The proposed DISCo approach to solve the imaginary-time Schrödinger equation via deterministic optimal control problems presents a novel and entirely data-driven approach to this long-standing problem. We have shown that a simple Ornstein--Uhlenbeck-type model seems to work well as a choice of control system. Assessing its suitability to model more complex systems, and the investigation of different types of control systems, will be one of the most pressing open questions in this context, next to the definition of bespoke dictionaries mentioned above.

\section*{Acknowledgments}
\textbf{Funding:} SP acknowledges support by the Priority Programme 1962 of the Deutsche Forschungsgemeinschaft (DFG).

\bibliographystyle{unsrturl}
\bibliography{Nelson}

\appendix

\section{Continuity equation}
\label{app:Continuity equation}

We start with the probability density $ p(x, t) $ associated with \eqref{eq:SDE}. This density solves the Fokker--Planck equation, which can be written as a continuity equation, i.e.,
\begin{equation*}
    \pd{p}{t} = -\nabla \cdot \left[ \left(b - \frac{1}{2p} \sum_{j=1}^d \pd{}{x_j} (a_{ij} \ts p)\right)\ts p \right].
\end{equation*}
If, moreover, the diffusion $\sigma$ in~\eqref{eq:SDE} is a constant multiple of the identity, the continuity equation simplifies to
\begin{equation*}
    \pd{p}{t} = - \nabla \cdot \left[ \left(b - \frac{\sigma^2}{2p}\nabla p \right)\ts p \right] = - \nabla \cdot \left[ \left(b - \frac{\sigma^2}{2} \nabla \log p \right)\ts p \right].
\end{equation*}
On the other hand, assume that $ \psi =  e^{R + i \ts S} $ solves \eqref{eq:TDSE}. Then
\begin{equation*}
    i \left(\pd{R}{t} + i \pd{S}{t} \right) = -\frac{1}{2} \big(\nabla R \cdot \nabla R - \nabla S \cdot \nabla S + 2 \ts i \ts \nabla R \cdot \nabla S + \Delta R + i \Delta S\big) + W.
\end{equation*}
For the imaginary part, this yields
\begin{equation*}
    \pd{R}{t} = -\nabla R \cdot \nabla S - \frac{1}{2} \Delta S.
\end{equation*}
The quantum probability distribution, given by $ \rho = e^{2 \ts R} $, thus satisfies
\begin{equation*}
    \pd{\rho}{t} = 2 \pd{R}{t} \ts \rho = \big(-2 \ts \nabla R \cdot \nabla S - \Delta S\big) \rho = -\nabla \cdot \left[\nabla S \ts \rho\right] = -\nabla \cdot \left[v \ts \rho\right].
\end{equation*}
Consequently, we will have $ p = \rho $ if the drift term $ b $ and diffusion constant $ \sigma $ are chosen such that
\begin{equation*}
    b - \frac{\sigma^2}{2} \nabla \log \rho = b - \sigma^2 \ts \nabla R = \nabla S,
\end{equation*}
which will be satisfied, e.g., for
\begin{equation*}
    \sigma^2 = 1
    \quad \text{and} \quad
    b = \nabla R + \nabla S = u + v.
\end{equation*}
Note that by setting $ \sigma^2 = 0 $, we obtain Bohmian mechanics.

\section{Bellman equation for stabilized control problem}
\label{app:stabilized_control}

We verify the claim made in Section~\ref{subsec:solution_control_problem} that if the control policy in~\eqref{eq:optimal_control_problem} is written as $u(s) = G(X_s^u) \ts \nu(s)$, then a strong solution $V$ to the HJB equation~\eqref{eq:hjb_pde} equals the value function of the stabilized control problem
\begin{equation} \label{eq:control_problem_modified}
\begin{split}
J(x, \tau) = \min_{\substack{\nu \colon [\tau, T] \to \mathbb{R}^d \\ \nu \in \mathcal{U}}} \, & \mathbb{E}^x \left[ \int_\tau^{T} W(X_s^u) + \frac{1}{2} \|G(X_s^u)\nu(s) \|^2 \, \mathrm{ds} - \log(\psi_0(X_T^u)) \right] \\
\text{s.t.} \quad \mathrm{d}X_s^u &= G(X_s^u)\nu(s) \, \mathrm{d}s + \mathrm{d}B_s, \quad X_\tau^u = x.
\end{split}
\end{equation}
To this end, we consider the dynamic programming equation~\cite{Fleming2006}[Ch. IV.3]
\begin{align*}
-\pd{V(x, \tau)}{\tau} &= \inf_{\nu \in \mathcal{U}} \left[\frac{1}{2}\Delta V(x, \tau) + W(x) + \nu^T G(x)^T \cdot \nabla V(x, \tau) +\frac{1}{2}\|G(x)\nu \|^2\right], \\
\nonumber V(x, T) &= -\log(\psi_0(x)).
\end{align*}
As the first two terms terms are independent of $\nu$, it suffices to minimize the remaining two terms with respect to $\nu$, which leads to the criticality condition:
\begin{equation*}
    (G(x)^\top G(x)) \ts \nu = - G(x)^\top \nabla V(x, \tau).
\end{equation*}
This equation is satisfied for any $\nu^* \in \mathbb{R}^{d_u}$ such that $G(x) \ts \nu^* = -\nabla V(x, \tau)$. As $G(x)$ is full rank, at least one such $\nu^*$ exists, and the minimal value attained is
\begin{equation*}
    (\nu^*)^\top G(x)^\top \cdot \nabla V(x, \tau) + \frac{1}{2}\|G(x)\nu^* \|^2 = -\frac{1}{2} \|\nabla V(x,\tau)||^2.
\end{equation*}
Thus, the dynamic programming equation is equivalent to the HJB equation~\eqref{eq:hjb_pde}.

\section{Superposition of two wave functions}
\label{app:Superposition}

For a superposition of two wave functions $ \psi = \psi_1 + \psi_2 $ with $ \psi_j = e^{R_j + i \ts S_j} $, we obtain the current and osmotic velocities $ u =  \frac{u'}{w} $ and $ v = \frac{v'}{w} $, with
\begin{align*}
    u' &= e^{2 R_1} \nabla R_1 + e^{2 R_2} \nabla R_2 \\
    & + e^{R_1 + R_2} \left[ \cos(S_1 - S_2) (\nabla R_1 + \nabla R_2) - \sin(S_1 - S_2)(\nabla S_1 - \nabla S_2) \right], \\
    v' &= e^{2 R_1} \nabla S_1 + e^{2 R_2} \nabla S_2 \\
    & + e^{R_1 + R_2} \left[\sin(S_1 - S_2) (\nabla R_1 - \nabla R_2) + \cos(S_1 - S_2)(\nabla S_1 + \nabla S_2)\right], \\
    w &= e^{2 R_1} + e^{2 R_2} + 2 \ts e^{R_1 + R_2} \cos(S_1 - S_2).
\end{align*}
We use such a superposition of two wave functions to construct a system with time-dependent dynamics in Section~\ref{sec:Data-driven methods for quantum systems}.

\end{document}